\newtheorem{theorem}{Theorem}[section]
\newtheorem{lemma}[theorem]{Lemma}
\newtheorem{definition}[theorem]{Definition}
\newcommand{\eps}{\varepsilon}
\newcommand{\ball}{\mathrm{B}}
\newcommand{\lca}{\mathrm{lca}}
\newcommand*{\sub}[1]{{#1}_{S}}
\newcommand*{\pr}[1]{{#1}^\prime}
\newcommand*{\gclus}[2]{\textsc{Generalized} \ensuremath{({#1},{#2})}-\textsc{Clustering}}
\newcommand*{\lclus}[2]{\textsc{Local} \ensuremath{({#1},{#2})}-\textsc{Clustering}}
\newcommand*{\clus}[2]{\textsc{Labeled} \ensuremath{({#1},{#2})}-\textsc{Clustering}}
\newcommand\NP{\textsc{NP}}
\def\NIL/{{\sc Nil}}
\DeclarePairedDelimiter\abs{\lvert}{\rvert}%
\DeclareMathOperator{\dis}{dis}
\DeclareMathOperator*{\diam}{diam}
\DeclareMathOperator*{\corr}{Corr}
\DeclareMathOperator*{\argmin}{arg\,min}
\theoremstyle{plain}
\title{Temporal Hierarchical Clustering}
\author{
Tamal K. Dey\thanks{
    Dept.~of Computer Science and Engineering, Dept.~of Mathematics, The Ohio State University. Columbus, Ohio.
    \texttt{dey.8@osu.edu}
}
\and
Alfred Rossi\thanks{
    Dept.~of Computer Science and Engineering, The Ohio State University. Columbus, Ohio.
    \texttt{rossi.49@osu.edu}
}
\and
Anastasios Sidiropoulos\thanks{
    Dept.~of Computer Science and Engineering, Dept.~of Mathematics, The Ohio State University. Columbus, Ohio.
    \texttt{sidiropoulos.1@osu.edu}
}
}
\date{}
\begin{document}
\maketitle
\begin{abstract}
We study hierarchical clusterings of metric spaces that change over time. This
is a natural geometric primitive for the analysis of dynamic data sets.
Specifically, we introduce and study the problem of finding a temporally
coherent sequence of hierarchical clusterings from a sequence of unlabeled point
sets. We encode the clustering objective by embedding each point set into an
ultrametric space, which naturally induces a hierarchical clustering of the set
of points. We enforce temporal coherence among the embeddings by finding
correspondences between successive pairs of ultrametric spaces which exhibit
small distortion in the Gromov-Hausdorff sense. We present both upper and
lower bounds on the approximability of the resulting optimization problems.
\end{abstract}


\section{Introduction}
Clustering is a primitive in data analysis which simultaneously serves to summarize data and elucidate its hidden structure.
In its most common form a clustering problem consists of a pair $(P,k)$, where
$P$ is a metric space, and $k$ indicates the desired number of clusters.
The goal of the problem is to try to find a partition of the points of $P$ into $k$ sets such that some objective is minimized.
Because of the fundamental nature of such a primitive, clustering enjoys broad application in a variety of settings and an extensive body of work exists to explain, refine, and adapt its methodology~\cite{DBLP:conf/soda/ArthurV07,DBLP:journals/corr/DeyRS14,DBLP:conf/colt/EldridgeBW15,forgy1965cluster,DBLP:journals/mor/HochbaumS85,DBLP:books/ph/JainD88}.

Having to decide the number of clusters in advance can be a source of difficulty in practice.
When faced with this problem, one common approach is to use hierarchical clustering to produce a parameter free summary of the input. That is, instead of producing a single partition of the input points, the goal is to find a rooted tree (called a \emph{dendrogram}) where the leaves are the points of $P$ and the internal nodes of the tree indicate the distance at which its subtrees merge.

We aim to address the analogous question of how to avoid having to decide the number of clusters in advance in the case of dynamic data. Here, we adopt the temporal clustering framework of
~\cite{DBLP:conf/esa/DeyRS17,DBLP:journals/corr/DeyRS17}. In this framework, the
input is a sequence of clustering problems, and the goal is to ensure that the
solutions of successive instances remain close according to some objective. This
differs from
incremental~\cite{DBLP:conf/nips/AckermanD14,DBLP:conf/stoc/CharikarCFM97} and
kinetic
clustering~\cite{DBLP:conf/compgeom/AbamB09,DBLP:conf/soda/BaschGH97,DBLP:journals/comgeo/FriedlerM10,DBLP:conf/compgeom/GaoGN04}
in that there is no constraint that the clustering instances in the input must
be incrementally related. Further, an optimal sequence of spatial clusterings is
not automatically a low cost solution to the temporal clustering instance.

In this paper we present a natural adaptation of hierarchical clustering to the
temporal setting. We study the problem of finding a temporally coherent sequence
of hierarchical clusterings from a sequence of unlabeled point sets. Our goal is
to produce a sequence of hierarchical clusterings (dendrograms) corresponding to
each set of points in the input such that successive pairs of clusterings have
similar dendrograms. We show that the corresponding optimization problem is
$\NP$-hard. However, a polynomial-time approximation algorithm exists when the
metric spaces in the input are taken from a common ambient metric space. We
explore the properties of this algorithm and find that it is unstable under
perturbations of the metric. We then show how to restore stability with only a
slight loss in the guarantee.

\subsection*{Problem formulation}

An idea used in this paper is that we may
hierarchically cluster a metric space by trying to find a low distortion
embedding of it into an ultrametric. An \emph{ultrametric} is a metric space
which satisfies a stronger version of the triangle inequality. Formally, an
ultrametric space is a metric space $U = (X, \mu)$ such that
$
  \mu(x,z) \leq  \max\{\mu(x,y),\, \mu(y,z)\},
$
for all $x,y,z \in X$.

Ultrametric spaces have interesting geometry. For instance, in an ultrametric
all points contained in a ball of radius $r$ are \emph{centers} of the ball.
That is, for any $q \in \ball_U(p;r)$, we have $\ball_U(q;r) =
\ball_U(p;r)$, where $\ball_M(p; r)$ denotes the ball of
radius $r$ about a point $p$ in a metric space $M$. Further, given any pair of balls
$B \subseteq U$, $\pr{B} \subseteq U$ with non-empty intersection, one has $B
\subseteq \pr{B}$ or $\pr{B} \subseteq B$. This simple fact implies that any
ultrametric space has the structure of a tree where items in a common subtree are
close. That is, an ultrametric induces a natural hierarchical clustering,
commonly depicted as a dendrogram (see Figure~\ref{fig:dendrogram}).

\begin{figure}
\centering
\includegraphics[width=0.6\linewidth]{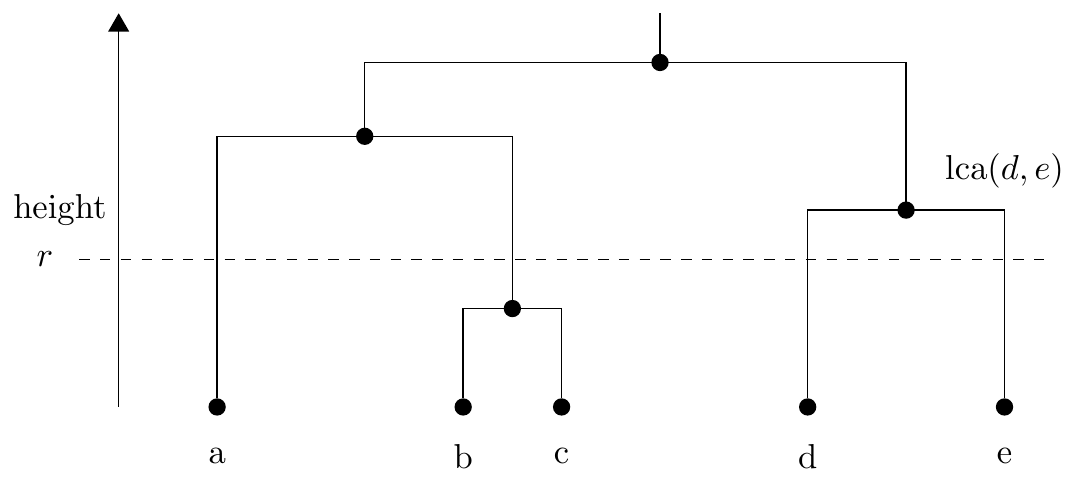}
\caption{\label{fig:dendrogram}
The dendrogram of an ultrametric, $(U, \mu)$, on points $\{a,b,c,d,e\}$. The
points of $\mu$ are the leaves of the dendrogram (height $0$). The distance
between two points $x, y \in U$ is given by the height of their lowest common
ancestor, $\lca(x,y)$, that is at $\mu(x,y)$. The dashed cut at $r$ induces a
natural clustering $\{\{a\}, \{b,c\}, \{d,e\} \}$ of the points of $U$ by
grouping points which belong to the same subtree. Each of these groups are
contained in disjoint balls of radius $r$.}
\end{figure}

\paragraph*{Similarity of dendrograms.}
For dendrograms over sets of points with identical labelings there is a natural
dissimilarity measure given by comparing the merge heights for any pair of
corresponding points. Namely,
$
  \max_{u, \pr{u} \in P} \abs{h_1(u,\pr{u}) - h_2(u,\pr{u})},
$
where $h_1$, and $h_2$ give the merge heights for a respective pair of
dendrograms.

One immediate obstacle to adopting this formalization is that our model does not
require that the sets of points comprising the input have the same cardinality.
For this reason, we take the point of view that two dendrograms are similar if
there exists a \emph{correspondence} between their leaves such that the merge
heights of corresponding points are close. Formally, a correspondence between
$U$ and $V$ is a relation $\mathcal{C} \subseteq U \times V$ such that
$\pi_{U}(\mathcal{C}) = U$, $\pi_{V}(\mathcal{C}) = V$. Here, $\pi_{U}$,
$\pi_{V}$ denote the canonical projections of $U \times V$ to $U$ and $V$
(respectively). Further we use the notation $\corr(U,V)$ to denote the set of
correspondences between $U$, $V$. Given a correspondence $\mathcal{C}$ between
two sets of points $P_1$, $P_2$, we have the following dissimilarity measure
which accounts for differences in the merge heights of a pair of
dendrograms under a correspondence
$\mathcal{C}$. This measure is called the \emph{distortion}~\cite{Bur01}, or the \emph{merge distortion distance} with respect to $\mathcal{C}$~\cite{DBLP:conf/colt/EldridgeBW15}, and is given by
$
  \dis(h_1, h_2; \mathcal{C})
    := \max_{(u,v), (\pr{u}, \pr{v}) \in \mathcal{C}}
       \abs{h_1(u,\pr{u}) - h_2(v,\pr{v})}.
$

\paragraph*{Generalized version.} Our goal, then, is not only to output a
sequence of hierarchical clusterings corresponding to the point sets of the
input, but also to produce an interstitial sequence of low distortion
correspondences linking successive pairs of dendrograms. We quantify the extent
to which an ultrametric faithfully represents an input metric space under the
$\ell^\infty$ norm. Specifically, let $U=(P, d_U)$, $V=(P, d_V)$ be a pair of
finite pseudometric spaces on the same set of points. We define
$
  L^\infty(U,V) = \max_{p,\pr{p} \in P} \abs{d_U(p,\pr{p}) - d_V(p,\pr{p})}.
$
In other words, a pseudometric space $V$ is a good fit for $U$ (and vice-versa) whenever
$L^\infty(U,V)$ is small.

Let $M := (X, d)$ be a pseudometric space. If for any $u,v,w \in X$, it holds that
$d(u, v) \leq \max \{d(u,w), d(w, v)\}$ then we say that $d$ is a
\emph{pseudo-ultrametric} and $M$ is a \emph{pseudo-ultrametric space}. We now
formally define this general version of the problem.

\begin{definition}[Temporal Hierarchical Clustering (Generalized
Version)]\label{def:htcgen}
Let $\mathcal{M} := \{M_i\}_{i=1}^t$ be a sequence of metric spaces, where for
each $i\in [t]$, $M_i=(P_i,\cdot)$, and let $\chi, \rho \in \mathbb{R}_{\geq 0}$.
The goal of the \textsc{Generalized Temporal Hierarchical Clustering} problem is
to find a sequence of pseudo-ultrametric spaces, $\{U_i := (P_i, \mu_i)\}_{i=1}^t$ and a sequence of
correspondences $\{\mathcal{C}_i\}_{i=1}^{t-1}$, where for each $i \in [t]$, we have $L^\infty(M_i, U_i) \leq \chi$, and for any $i\in [t-1]$, $\mathcal{C}_i \in \corr(P_i, P_{i+1})$ with $\dis(\mu_i, \mu_{i+1}; \mathcal{C}_i) \leq \rho$.
Such a clustering is called a \gclus{\chi}{\rho} of $\mathcal{M}$.
\end{definition}

We show in Section~\ref{sec:gclus} that the \textsc{Generalized Hierarchical
Temporal Clustering} problem is $\NP$-hard.

\paragraph*{Local version.} Absent the ambient metric space, the above notion of
distortion would be sufficient to capture the intuitive idea that consecutive
hierarchical clusterings should be close. However, it is easy to produce
examples where symmetries in the input permit low-distortion correspondences
which are manifestly non-local in the ambient space. Thus it makes sense to
further require that any correspondence be local in the ambient metric. We say
that a correspondence $\mathcal{C}$ is \emph{$\delta$-local} provided that
$
  \max_{(u, v) \in \mathcal{C}} d(u,v) \leq \delta,
$
where $d$ is the distance in the ambient space.

We now formalize this version of the problem. Here, the input $P :=
\{P_i\}_{i=1}^t$, consists of a sequence of unlabeled, finite, non-empty subsets
of a metric space $M$. We call such a sequence a \emph{temporal-sampling} of $M$
of length $t$, and refer to individual elements of the sequence ($P_i$ for some
$i \in [t]$) as a \emph{level} of $P$ (see
~\cite{DBLP:conf/esa/DeyRS17,DBLP:journals/corr/DeyRS17}). The \emph{size} of $P$ is simply the sum of the number of points in each level of $P$, that is $\sum_{i=1}^t |P_i|$. Let $M=(X,d)$ be a
metric space. For any $P \subseteq X$ we use the notation $M[P]$ to denote the
restriction of $M$ to $P$, that is, $M[P] = (P, d\big|_{P})$. We have the
following definition:
\begin{definition}[Temporal Hierarchical Clustering (Local Version)]\label{def:htclocal}
Let $P := \{P_i\}_{i=1}^t$ be a temporal-sampling over a metric space $M=(X, d)$, and let $\chi, \delta \in \mathbb{R}_{\geq 0}$. The goal of the
\textsc{Local Temporal Hierarchical Clustering} problem is to find a sequence of
pseudo-ultrametric spaces, $\{U_i\}_{i=1}^t$, where for each $i\in [t]$,
$U_i=(P_i,\cdot)$, and $L^\infty(M[P_i], U_i) \leq \chi$, together with a
sequence of correspondences $\{\mathcal{C}_i\}_{i=1}^{t-1}$ where for any $i \in
[t-1]$, $\mathcal{C}_i \in \corr(P_i, P_{i+1})$ with $\max_{(u,v) \in
\mathcal{C}_i} d(u,v) \leq \delta$. Such a clustering is called a
\lclus{\chi}{\delta}.
\end{definition}

While the general version of the problem is $\NP$-hard, the local version is
trivial and can be computed in $O(n^2)$-time by computing a correspondence
minimizing the Hausdorff distance for each pair of successive levels. 
We highlight this problem for expository purposes as well as a prelude to a
labeled version of the problem.

This version of the problem is further of interest in that it can be used to
approximate the general version such that the resulting distortion is bounded in
terms of $\chi$, and $\delta$. We discuss this topic further in
Section~\ref{sec:gclus}.

\paragraph*{Labeled version.}
There are already several drawbacks with previous versions of the problem in
regard to making concrete cluster assignments. In particular it is unclear how
to coherently assign cluster labels to points given a correspondence. Moreover,
we must account for the fact that the number of points can vary across levels. Taking
the point of view that a good labeling is one in which labels in successive
levels remain close, we opt to allow points to be given multiple labels. Doing
so affords us additional bookkeeping to help ensure that labelings for near by levels
remain local, even across levels which require relatively few labels.

\begin{figure}
\centering
\includegraphics[width=0.6\linewidth]{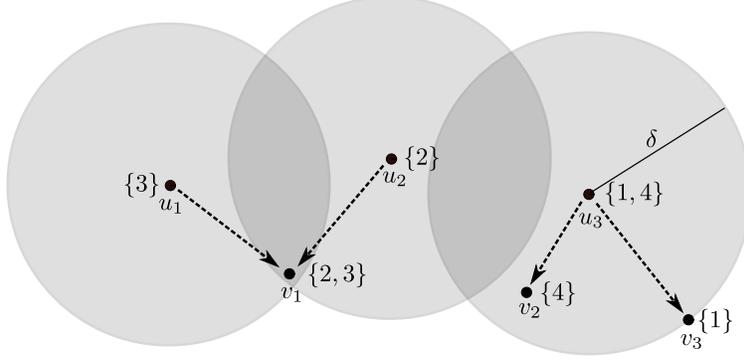}
\caption{\label{fig:labeling}
A $\delta$-contiguous $4$-labeling of $P_1, P_2 \subset \mathbb{R}^2$,
$P_1=\{u_1, u_2, u_3\}$, $P_2=\{v_1, v_2, v_3\}$. Balls of radius $\delta$ are
drawn about the points of $P_1$. Note that the labels used by points of $P_2$
``come from'' points of $P_1$ which are $\delta$-close, demonstrating condition
$2$ of Definition~\ref{def:contiglabel}. The symmetric condition also holds.
Further note that there is no requirement that $|P_1| = |P_2|$.
}
\end{figure}

To this end, given a set $P$, a $k$-labeling of $P$ is a function $L:P
\rightarrow 2^{[k]}$ such that $\{L(p) : p\in P\}$ is a partition of $[k]$.
Informally, we say two labelings are $\delta$-contiguous if the copies of the
same label in a pair of assignments are no farther than $\delta$. We have the
following definition:
\begin{definition}\label{def:contiglabel}
Given a pair of sets $P_1$, $P_2$ of points from a metric space $M$, and a pair
$k$-labelings $L_1$, $L_2$ of $P_1$, $P_2$ (respectively), we say that $L_1$ and
$L_2$ are $\delta$-contiguous in $M$ if
\begin{enumerate}
  \item for all $u \in P_1$, $L_{1}(u) \subseteq \bigcup_{v \in \ball_M(u, \delta) \cap P_2} L_{2}(v)$,
  \item for all $v \in P_2$, $L_{2}(v) \subseteq \bigcup_{u \in \ball_M(v, \delta) \cap P_1} L_{1}(u)$.
\end{enumerate}
See Figure~\ref{fig:labeling} for an example.
\end{definition}

Since points can be multi-labeled, we need a tie-breaking rule to determine
which label applies. By convention we take the label of any set of points to be
the smallest label among all labels of points in the set. Moreover, a good
solution should never use more than $n$ labels on an input of size $n$.
We are now ready to define the main version of the problem.

\begin{definition}[Temporal Hierarchical Clustering]\label{def:htclabeled}
Let $P := \{P_i\}_{i=1}^t$ be a temporal-sampling of size $n$ over a metric
space $M$ with distance, $d$, and let $\chi, \delta \in \mathbb{R}_{\geq 0}$.
The goal of the \textsc{Temporal Hierarchical Clustering} problem is to find a
sequence of pseudo-ultrametric spaces, $\{U_i\}_{i=1}^t$, such that for any $i
\in [t]$, $L^\infty(M[P_i], U_i) \leq \chi$, and a sequence of $k$-labelings,
$\{L_i\}_{i=1}^t$, for $k \leq n$, such that for any $i \in [t-1]$, $L_i$,
$L_{i+1}$ are $\delta$-contiguous. Such a clustering is called a
\clus{\chi}{\delta}.
\end{definition}

\paragraph*{Overview.} In Section~\ref{sec:lclus} we show how to find an
optimal solution to the local version of the problem in $O(n^2)$-time. Then, in
Section~\ref{sec:clus}, we give an $O(n^3)$-time algorithm which converts any
\lclus{\chi}{\delta} into a \clus{\chi}{\delta}. This combined with
Section~\ref{sec:lclus} implies an optimal solution for the labeled version of
the problem. In Section~\ref{sec:gclus} we show that the general version is
$\NP$-hard, but observe that the local version provides an approximate solution
in the special case where the inputs comes from a common metric space. In
Section~\ref{sec:stab} we show that the optimal algorithms are unstable with
respect to perturbations of the metric, and how to ensure stability by
changing the ultrametric construction. Last, Section~\ref{sec:exp} contains an
experiment.

\section{Local Version}\label{sec:lclus}
In this section we present a straightforward solution to the local version of
temporal hierarchical clustering in $O(n^2)$-time. We are not directly
interested in the solution of this problem. Instead, this section serves as a
prelude to solving the labeled version.

\paragraph*{Algorithm.} The algorithm is trivial. Let $\mathcal{A}$ be a
scheme for finding the $\ell^\infty$-nearest ultrametric to a metric. For each
set of points in the input we use $\mathcal{A}$ to find an ultrametric. To
compute correspondences between successive levels $P_i$, $P_{i+1}$, we add all
pairs of points $(u,v) \in P_i \times P_{i+1}$ such that $u$ and $v$ are at a distance of at most the Hausdorff
distance of $P_i$, $P_{i+1}$. Formally, the algorithm takes a temporal-sampling
$P=\{P_i\}_{i=1}^t$ of a metric space $M$ as input and consists of the following
steps:
\begin{description}[itemsep=0pt]
  \item{\textbf{Step 1: Fitting by ultrametrics.}} For each $i \in [t]$, find an
  ultrametric $U_i=\mathcal{A}(M[P_i])$ near to $M[P_i]$ via a chosen scheme.

  \item{\textbf{Step 2: Build correspondences.}} For each $i \in [t-1]$, compute \\
  $\mathcal{C}_i = \{(u,v) \in P_i \times P_{i+1} : d(u,v) \leq d^M_H(P_i, P_{i+1}) \}$.
  Here, $d^M_H$ denotes the Hausdorff distance in the ambient metric space.

  \item{\textbf{Step 3: Return $\left(\{U_i\}_{i=1}^t, \{\mathcal{C}_i\}_{i=1}^{t-1}\right)$.}}
\end{description}

\paragraph*{Analysis.} Let $n$ denote the size of the temporal sampling.
In this section we argue that the above algorithm
returns an optimal solution in $O(n^2)$ time, provided that it is equipped
with a scheme for finding the $\ell^\infty$-nearest ultrametric to an $n$-point
metric space in $O(n^2)$-time. The following theorem ensures that one exists.

\begin{theorem}[Farach-Colton Kannan Warnow~\cite{DBLP:conf/stoc/FarachKW93}]\label{thm:fkw}
Let $M$ be an $n$-point metric space and let $\mathcal{U}(M)$ denote the set of
ultrametrics on the points of $M$. There exists an $O(n^2)$-time algorithm
which finds $\argmin_{U \in \mathcal{U}(M)} L^\infty(U, M)$.
\end{theorem}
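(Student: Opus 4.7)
The plan is to reduce the problem to computing the \emph{subdominant ultrametric} of $M$, obtainable from a minimum spanning tree (MST) of $M$. First I would define $\mu_{\mathrm{sub}}(x,y) := \min_\pi \max_{e \in \pi} d(e)$, where $\pi$ ranges over all $x$--$y$ paths in the complete graph on $P$ weighted by $d$; equivalently, $\mu_{\mathrm{sub}}(x,y)$ is the bottleneck weight on the unique $x$--$y$ path in any MST. A short induction on path length shows that $\mu_{\mathrm{sub}}$ is an ultrametric pointwise dominated by $d$, and in fact is the \emph{largest} such ultrametric: any ultrametric $\mu \leq d$ satisfies $\mu(x,y) \leq \max_i \mu(x_i, x_{i+1}) \leq \max_i d(x_i, x_{i+1})$ along the bottleneck-optimal path.

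Next I would set $\alpha := \max_{x, y \in P}\bigl(d(x,y) - \mu_{\mathrm{sub}}(x,y)\bigr) \geq 0$ and take $\mu^*(x,y) := \mu_{\mathrm{sub}}(x,y) + \alpha/2$ for $x \neq y$, with $\mu^*(x,x) := 0$. Shifting off-diagonal entries of an ultrametric by a positive constant preserves the strong triangle inequality, so $U^* := (P, \mu^*)$ is an ultrametric, and by construction $\mu^*(x,y) - d(x,y) \in [-\alpha/2,\, \alpha/2]$, giving $L^\infty(U^*, M) \leq \alpha/2$.

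The main step is the matching lower bound. Let $U = (P, \mu)$ be any ultrametric with $L^\infty(U, M) = \epsilon$, and define $\widetilde{\mu}(x,y) := \max\{0,\, \mu(x,y) - \epsilon\}$. The key observation is that $\widetilde{\mu}$ remains an ultrametric (the operation $\max\{0,\, \cdot - \epsilon\}$ preserves the ultrametric inequality) and satisfies $\widetilde{\mu}(x,y) \leq d(x,y)$ pointwise, since $\mu(x,y) - d(x,y) \leq \epsilon$. Maximality of the subdominant ultrametric then yields $\widetilde{\mu} \leq \mu_{\mathrm{sub}}$, so $\mu(x,y) \leq \mu_{\mathrm{sub}}(x,y) + \epsilon$; combined with $\mu(x,y) \geq d(x,y) - \epsilon$, this forces $d(x,y) - \mu_{\mathrm{sub}}(x,y) \leq 2\epsilon$ for every pair, hence $\alpha \leq 2\epsilon$ and $\epsilon \geq \alpha/2$. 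Consequently $U^*$ attains the minimum.

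For the runtime, Prim's algorithm computes the MST of the $n$-point complete graph in $O(n^2)$ time; one traversal of the MST then fills in all $\binom{n}{2}$ values of $\mu_{\mathrm{sub}}$ in $O(n^2)$ time; computing $\alpha$ and performing the final shift are each $O(n^2)$. The chief subtlety is the maximality of $\mu_{\mathrm{sub}}$ together with the clip-and-shift transformation $\widetilde{\mu}$ used in the lower bound, both of which are direct consequences of the ultrametric inequality; everything else reduces to standard MST machinery.
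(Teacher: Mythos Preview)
Your argument is correct, but note that the paper does not itself prove Theorem~\ref{thm:fkw}; it is quoted from~\cite{DBLP:conf/stoc/FarachKW93}. The paper does, however, restate the Farach--Kannan--Warnow algorithm in Section~\ref{sec:stab}, and it differs from yours: rather than uniformly shifting the subdominant ultrametric by $\alpha/2$, their procedure assigns to each MST edge $e=\{u,v\}$ a cut-weight $p(e)=\max\{d(x,\pr{x}):e\in T_M(x,\pr{x}),\ \mu_S(x,\pr{x})=d(u,v)\}$ and sets the distance between points first separated by cutting $e$ to $p(e)-\tfrac{1}{2}L^\infty(M,\mu_S(M))$. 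Both constructions attain the optimal value $\alpha/2$, but they output different ultrametrics in general. Your lower-bound argument is also different in flavor from the one the paper gives for the related Lemma~\ref{lem:subdomapprox}: the paper deduces $\alpha\le 2\epsilon$ from the stability of $\mu_S$ (Lemma~\ref{lem:subulmstable}), whereas you obtain it directly via the clipped ultrametric $\widetilde{\mu}=\max\{0,\mu-\epsilon\}$ and the maximality of $\mu_{\mathrm{sub}}$. One incidental payoff of your route is worth highlighting: since $\mu_S$ is $1$-stable and $\alpha$ is $2$-stable under $\epsilon$-perturbations of $d$, your shifted ultrametric $\mu^*=\mu_{\mathrm{sub}}+\alpha/2$ is itself stable, in contrast to the cut-weight construction, whose instability is precisely what Section~\ref{sec:stab} exhibits.
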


We are now ready to prove the main theorem of this section.

\begin{theorem}\label{thm:local}
Let $P$ be a temporal-sampling of size $n$ which admits a \lclus{\chi}{\delta}.
There exists an $O(n^2)$-time algorithm returning a \lclus{\chi}{\delta}.
\end{theorem}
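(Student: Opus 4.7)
The plan is to verify the two claims of the theorem separately: correctness (the output is a \lclus{\chi}{\delta}) and running time ($O(n^2)$). Correctness breaks naturally into the two steps of the algorithm. For Step~1, I invoke Theorem~\ref{thm:fkw}: since a \lclus{\chi}{\delta} exists by hypothesis, for each $i\in[t]$ there is some ultrametric $U_i'$ on $P_i$ with $L^\infty(M[P_i], U_i') \leq \chi$. The scheme $\mathcal{A}$ returns the $\ell^\infty$-nearest ultrametric, so $L^\infty(M[P_i], U_i) \leq L^\infty(M[P_i], U_i') \leq \chi$, giving the ultrametric-fit condition.

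For Step~2, I need to check both that the constructed $\mathcal{C}_i$ really is a correspondence in the sense of Definition~\ref{def:htclocal} (both projections are onto), and that it is $\delta$-local. For each $u\in P_i$, the definition of Hausdorff distance guarantees some $v\in P_{i+1}$ with $d(u,v)\leq d^M_H(P_i, P_{i+1})$, so $u\in \pi_{P_i}(\mathcal{C}_i)$; symmetrically, every $v\in P_{i+1}$ is covered, so $\mathcal{C}_i \in \corr(P_i,P_{i+1})$. For $\delta$-locality, I use a lower-bound argument: any $\delta$-local correspondence between $P_i$ and $P_{i+1}$ must, for every $u\in P_i$, contain a pair $(u,v)$ with $d(u,v)\leq \delta$, which forces $d^M_H(P_i,P_{i+1})\leq \delta$. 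Since the hypothesized \lclus{\chi}{\delta} supplies such a correspondence, every pair added in Step~2 has distance at most $d^M_H(P_i,P_{i+1})\leq \delta$, as required.

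For the running time, Step~1 costs $\sum_{i=1}^t O(|P_i|^2) = O\!\left(\bigl(\sum_i |P_i|\bigr)^2\right) = O(n^2)$ by Theorem~\ref{thm:fkw}. In Step~2, computing the Hausdorff distance $d^M_H(P_i, P_{i+1})$ by brute force over all pairs takes $O(|P_i|\cdot|P_{i+1}|)$ time, and the resulting correspondence is built in the same time; summing over $i$ gives $\sum_{i=1}^{t-1} O(|P_i|\cdot|P_{i+1}|) = O(n^2)$. The total is $O(n^2)$.

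The only mildly subtle point is the $\delta$-locality argument in Step~2, where one must observe that Hausdorff distance is a lower bound on the maximum edge length of any correspondence, so the greedy ``take everything within the Hausdorff distance'' rule is safe. Everything else reduces to Theorem~\ref{thm:fkw} and bookkeeping on the sum $\sum|P_i|=n$.
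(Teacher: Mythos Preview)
Your proof is correct and follows essentially the same approach as the paper: you invoke Theorem~\ref{thm:fkw} for the $\chi$ bound, argue that the Hausdorff distance lower-bounds the locality parameter of any correspondence to get the $\delta$ bound, verify that the Step~2 construction is a genuine correspondence, and sum the per-level costs for the running time. If anything, your treatment is slightly cleaner, since you argue directly about the algorithm's ``all pairs within Hausdorff distance'' correspondence and you write the Step~2 cost as $O(|P_i|\cdot|P_{i+1}|)$ rather than the paper's looser $O(n_i^2)$.
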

\begin{proof}
Let $t$ denote the length of $P$, and $M$ the ambient metric space. Run the
algorithm of Section~\ref{sec:lclus} where $\mathcal{A}$ is the algorithm of
Farach-Colton, Kannan, and Warnow~\cite{DBLP:conf/stoc/FarachKW93}. Let
$\{U_i\}_{i=1}^t$ denote the pseudo-ultrametrics in the output. By
Theorem~\ref{thm:fkw}, $\pr\chi = \max_{i \in [t]} L^\infty(U_i, M[P_i]) \leq
\chi$, as otherwise $\chi > \pr\chi$ would imply that for some level $i \in
[t]$, the algorithm of Theorem~\ref{thm:fkw} fails to return an
$\ell^\infty$-nearest ultrametric to $P_i$.

Let $\pr\delta=\max_{i\in[t-1]} d^M_H(P_i, P_{i+1})$.
We now argue that $\pr\delta$ is smallest possible in the sense that $P$ admits a \lclus{\chi'}{\delta'}, but does not admit an \lclus{\chi}{\delta} for any $\chi$, when $\delta < \delta'$.
Let $\Gamma := \{\delta : P~\textrm{admits a}~\lclus{\cdot}{\delta}\}$.
First we show $\pr\delta\leq\inf\Gamma$.
Fix any \lclus{\cdot}{\delta}, and let
$\{\mathcal{C}_i\}_{i=1}^{t-1}$ be the associated sequence of $\delta$-local
correspondences. Fix some $1 \leq i < t$ and some $p \in P_i$. Since
$\mathcal{C}_i$ is a correspondence, $\pi_{P_i}(\mathcal{C}_i) = P_i$, and thus
there exists $q \in P_{i+1}$ such that $(p, q) \in \mathcal{C}_i$. Since
$\mathcal{C}_i$ is $\delta$-local it holds that $d(p,q) \leq \delta$, and we
conclude $d(p, P_{i+1}) \leq \delta$. An analogous argument for $q \in P_{i+1}$
implies $d(P_i, q) \leq \delta$. Thus, for $1 \leq i < t$, $
    \pr\delta \leq d^M_H(P_i, P_{i+1}) = \max\left(
        \max_{p \in P_i} d(p, P_{i+1}),
        \max_{q \in P_{i+1}} d(P_i, q)
      \right) \leq \delta.
$
Now we argue that $\pr\delta$ is feasible.
Fix $1 \leq i < t$. Since $d_H(P_i, P_{i+1}) \leq \pr\delta$ it holds that for
every point $p \in P_i$ there exists $q_p \in P_{i+1}$ such that $d(p,q_p) \leq
\pr\delta$. Construct a set $\mathcal{C}_i^+ = \{(p, q_p) : p \in P_i, q_p \in
P_{i+1}, ~\textrm{and}~ d(p, q_p) \leq \pr\delta\}$. Analogously construct a set
$\mathcal{C}_i^- = \{(p_q, q) : q \in P_{i+1}, p_q \in P_{i}, ~\textrm{and}~
d(p_q, q) \leq \pr\delta\}$. The set $\mathcal{C}_i := \mathcal{C}_i^+ \cup
\mathcal{C}_i^-$ is thus a $\pr\delta$-local correspondence between $P_i$,
$P_{i+1}$. Thus, it follows that $\pr\delta \in \Gamma$.

The preceding two paragraphs show that the result is a \lclus{\chi}{\delta}. It
only remains to show the algorithm runs in $O(n^2)$-time. Let $n_i = |P_i|$ for
$i \in [t]$. Step $1$ takes $O(n^2)$-time as finding the $\ell^\infty$-nearest
ultrametric for level $i$ can be done in $O(n_i^2)$-time by
Theorem~\ref{thm:fkw}. Computing the inter-level Hausdorff distance and building
the correspondence for level $i$ in Step $2$ can both be done in
$O(n_i^2)$-time, for a total of $O(n^2)$-time over all.
\end{proof}

\section{Labeled Version}\label{sec:clus}

In this section we show how to convert a \lclus{\chi}{\delta} into a
\clus{\chi}{\delta} in $O(n^3)$-time by transforming a sequence of
$\delta$-local correspondences into a sequence of pairwise $\delta$-contiguous
labelings.

\paragraph*{Network flow.} Drawing upon an idea in
\cite{DBLP:conf/esa/DeyRS17,DBLP:journals/corr/DeyRS17}, we employ minimum cost
feasible flow to find a $\delta$-contiguous labeling with few labels. Formally,
we construct the flow instance as follows: Let $P=\{P_i\}_{i=1}^t$ be a
temporal-sampling. Given the $\delta$-local correspondences of a
\lclus{\cdot}{\delta}, $\{\mathcal{C}_i\}_{i=1}^{t-1}$, the following
construction transforms $P$ into a flow network, $F :=
F(\{\mathcal{C}_i\}_{i=1}^{t-1})$, such that corresponding points in successive
levels are connected by a directed edge which points to the higher indexed
level. Moreover, a source, $s$, connects to each of the points in the first
level, while the sink $\pr{s}$ is the target of a directed edge from each point
in $P_t$. Formally, let $V_i(P) = \{(i,v) : v \in P_i\}$. For $i \in [t-1]$, let
$E_i(P) \subseteq V_i(P) \times V_{i+1}(P)$ such that $\left((i,u), (i+1,
v)\right) \in E_i(P)$ if and only if $(u, v) \in \mathcal{C}_i$. The vertices of
$F$ consist of $s$, $\pr{s}$, and the contents of $V_1(P), \ldots, V_t(P)$. The
edges of $F$ consist of the union of $\{s\} \times V_1(P)$, $V_t(P) \times \{\pr{s}\}$,
and $\bigcup_{i=1}^{t-1} E_i(P)$. Specifically, we seek an integral flow with minimum flow value
such that the in-flow of each vertex of $\bigcup_{i=1}^t V_i(P)$ is at least one.

\paragraph*{Algorithm.} The main idea is to view each correspondence as a
bipartite graph. We concatenate the sequence of correspondences together by
merging overlapping vertices. This allows us to interpret the sequence of
correspondences as a graph. Our goal is then to decompose this graph into a path
cover of small size, which we do by solving a flow instance. Since this graph
only contains edges between points which are close, the resulting labeling will
be contiguous. Formally, we perform the following steps:
\begin{description}[itemsep=0pt]
  \item{\textbf{Step 1: Constructing a flow instance.}} Given a sequence of
  $\delta$-local correspondences of a \lclus{\cdot}{\delta}, construct the
  minimum flow instance $F := F(\{\mathcal{C}_i\}_{i=1}^{t-1})$ as defined
  above.
  \item{\textbf{Step 2: Solve the flow instance.}} Find a minimum cost
  integral flow $f$ in $F$.

  \item{\textbf{Step 3: Decompose the flow.}} Greedily extract unit flows from
  $f$ to construct a list of paths $\{\tau_i\}_{i=1}^k$.
  \item{\textbf{Step 4: Construct label functions.}} Build label functions $L_1,
  \ldots, L_t$ by initializing each to the empty set. Next, for each $\tau_j \in
  \{\tau_i\}_{i=1}^k$, denote $\tau_j$ as the $t$ point sequence $p_1, \ldots, p_t$. Append label $j$ to
  $L_1(p_1), \ldots, L_t(p_t)$.
  \item{\textbf{Step 5: Output.}} Return the labelings $L_1, \ldots, L_t$.
\end{description}

\paragraph*{Analysis.} In this section we show that the above algorithm finds an optimal solution in $O(n^3)$-time on temporal samplings of size $n$.
To this end we now argue that the above network flow instance is feasible.

\begin{lemma}\label{lem:flow}
Let $P=\{P_i\}_{i=1}^t$ be a temporal-sampling. Given the $\delta$-local
correspondences of a \lclus{\cdot}{\delta}, $\{\mathcal{C}_i\}_{i=1}^{t-1}$, the
flow instance $F := F(\{\mathcal{C}_i\}_{i=1}^{t-1})$ is feasible with
value at most $n$.
\end{lemma}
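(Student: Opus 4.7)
The plan is to exhibit an explicit integral feasible flow in $F$ whose value is at most $n$, which simultaneously establishes both feasibility and the value bound. Since the flow value equals the amount leaving $s$, and we need every vertex of $\bigcup_{i=1}^t V_i(P)$ to receive in-flow at least one, it suffices to cover all such vertices by $s$-$s'$ paths and then push one unit of flow along each.

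\textbf{Constructing a path through each vertex.} For each $v \in \bigcup_{i=1}^t V_i(P)$, say $v \in V_i(P)$, I will build an $s$-$s'$ path $\pi_v$ passing through $v$. To extend backward, I use the fact that $\mathcal{C}_{i-1}$ is a correspondence, so $\pi_{P_i}(\mathcal{C}_{i-1}) = P_i$ guarantees some $u_{i-1} \in P_{i-1}$ with $(u_{i-1}, v) \in \mathcal{C}_{i-1}$, i.e.\ an edge in $E_{i-1}(P)$. Iterating, I obtain vertices $u_1, \ldots, u_{i-1}$ in successive levels, with consecutive edges in $F$, and then prepend the edge $(s, u_1)$. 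Symmetrically, using $\pi_{P_{i+1}}(\mathcal{C}_i) = P_{i+1}$ at each step, I extend forward to some $w_t \in V_t(P)$ and append the edge $(w_t, s')$. The result is a valid $s$-$s'$ path $\pi_v$ in $F$ passing through $v$.

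\textbf{Assembling the flow.} Define $f$ by sending one unit of flow along each $\pi_v$ and summing (edge capacities are implicitly infinite, so this is a valid integral flow). Each vertex $v \in \bigcup_{i=1}^t V_i(P)$ appears on its own path $\pi_v$, hence has in-flow at least one, so the lower-bound constraints are satisfied and $F$ is feasible. The total value equals the amount leaving $s$, which is at most the number of paths, namely $|\bigcup_{i=1}^t V_i(P)| = \sum_{i=1}^t |P_i| = n$. Since the minimum cost feasible flow has value no larger than this construction, the claimed bound follows.

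\textbf{Main obstacle.} There is no substantive difficulty beyond invoking the definition of a correspondence to justify the backward/forward extensions; the rest is bookkeeping. One minor subtlety to double-check is that the definition of $F$ in the paper treats the vertex-level lower bound directly rather than through node-splitting, but either formulation admits the same path-decomposition argument and gives the same value bound.
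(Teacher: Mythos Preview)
Your proof is correct and follows essentially the same approach as the paper: extend each vertex to an $s$--$s'$ path via the correspondences and push one unit of flow along each such path. The paper's version only builds a path for vertices not yet receiving flow (a greedy refinement that does not improve the stated bound), and one small slip to fix: forward extension from $v\in P_i$ uses $\pi_{P_i}(\mathcal{C}_i)=P_i$, not $\pi_{P_{i+1}}(\mathcal{C}_i)=P_{i+1}$.
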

\begin{proof}
For any $1 \leq i \leq t$, any point $p \in P_i$ can be extended to a path from
$P_1$ to $P_t$, by iteratively extending the ends of the path via the
correspondences. Construct a feasible flow $f$ by initializing $f$ to be zero
everywhere. Greedily extend points receiving no flow to paths from $P_1$ to
$P_t$ in the described manner, and increase the flow value of $f$ along the path
by $1$. It follows that $f$ remains integral and satisfies all lower bounds of
$F$. Since we flow at most $1$ unit of flow per point of $P$, the value of $f$
is at most $n$.
\end{proof}

The next theorem shows that the algorithm outputs an optimal clustering.

\begin{theorem}
Let $P$ be a temporal-sampling of size $n$. There exists an $O(n^3)$-time
algorithm which is guaranteed to output a \clus{\chi}{\delta} of $P$,
for any $\chi$, $\delta$ such that $P$ admits a \clus{\chi}{\delta}.
\end{theorem}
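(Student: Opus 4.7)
The plan is to chain the local algorithm of Section~\ref{sec:lclus} with the flow-based construction above. A preliminary step is to note that a \clus{\chi}{\delta} solution immediately induces a \lclus{\chi}{\delta} solution: given $\delta$-contiguous labelings $L_i,L_{i+1}$, for each pair $(u,\ell)$ with $u\in P_i$ and $\ell\in L_i(u)$, Condition~1 of Definition~\ref{def:contiglabel} supplies some $v\in\ball_M(u,\delta)\cap P_{i+1}$ with $\ell\in L_{i+1}(v)$; adding all such $(u,v)$, together with the symmetric pairs from Condition~2, yields a $\delta$-local correspondence $\mathcal{C}_i\in\corr(P_i,P_{i+1})$. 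Hence the hypothesis of the theorem implies that $P$ admits a \lclus{\chi}{\delta}, so Theorem~\ref{thm:local} produces, in $O(n^2)$ time, the pseudo-ultrametrics $\{U_i\}$ of our output together with $\delta'$-local correspondences $\{\mathcal{C}_i\}_{i=1}^{t-1}$ for some $\delta'\le\delta$.

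Next, I would run the flow algorithm on $\{\mathcal{C}_i\}$. By Lemma~\ref{lem:flow}, $F$ is feasible with value at most $n$. A minimum-value integral feasible flow in $F$ can be computed in $O(n^3)$ time by a standard reduction to min-cost circulation on a graph with $O(n)$ vertices and $O(n^2)$ edges. The greedy decomposition in Step~3 then produces $k\le n$ unit $s$-to-$\pr{s}$ paths $\tau_1,\ldots,\tau_k$. Since every non-source/sink edge of $F$ lies between consecutive levels, each $\tau_j$ visits exactly one vertex in every $P_i$, and Step~4 places label $j$ on those vertices.

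To conclude, I would verify that the output meets Definition~\ref{def:htclabeled}. The ultrametric condition $L^\infty(M[P_i],U_i)\le\chi$ is inherited from Step~1. Because label $j$ lands on exactly one vertex of $P_i$, the collection $\{L_i(p):p\in P_i\}$ partitions $[k]$ with $k\le n$. For $\delta$-contiguity, fix $u\in P_i$ and $j\in L_i(u)$: the edge of $\tau_j$ from level $i$ to level $i+1$ has the form $((i,u),(i+1,v))$, where by construction of $F$ we have $(u,v)\in\mathcal{C}_i$, so $d(u,v)\le\delta'\le\delta$ and $j\in L_{i+1}(v)$; Condition~2 follows by a symmetric argument. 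The total runtime is $O(n^2)+O(n^3)=O(n^3)$.

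The main obstacle is the $O(n^3)$ bound on the minimum-value integral feasible flow step: the lower bounds in $F$ are imposed on vertex in-flow rather than on edge capacities, so some care is needed to reduce it cleanly to a textbook min-cost flow instance, and to argue that greedy decomposition always extracts exactly $t$-vertex paths. The remaining pieces---showing that a \clus yields a \lclus and that the path decomposition gives a contiguous labeling using at most $n$ labels---are essentially immediate from the definitions and Lemma~\ref{lem:flow}.
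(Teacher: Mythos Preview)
Your proposal is correct and follows essentially the same approach as the paper: reduce \clus{\chi}{\delta} to \lclus{\chi}{\delta}, invoke Theorem~\ref{thm:local}, build the flow instance, solve it, decompose into paths, and verify $\delta$-contiguity. The paper resolves your flagged obstacle about the $O(n^3)$ flow bound simply by citing Gabow and Tarjan~\cite{DBLP:journals/siamcomp/GabowT89} for the minimum-cost flow computation, without elaborating on the vertex-lower-bound reduction.
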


\begin{proof}
Let $t$ be the length of $P$. Run the algorithm of Section~\ref{sec:lclus} on
$P$. Since $P$ admits a \clus{\chi}{\delta}, it also admits a
\lclus{\chi}{\delta} where for any $1 \leq i < t$, the $i$-th correspondence is
given by
$
  \mathcal{C}_i=\{(u, v) : (u,v) \in P_i \times P_{i+1},\, L_i(u) \cap L_{i+1}(v) \neq \varnothing\}.
$
Thus, by Theorem~\ref{thm:local}, we are guaranteed a \lclus{\chi}{\delta} in
$O(n^2)$-time. Let $\{\mathcal{C}_i\}_{i=1}^{t-1}$ be its $\delta$-local
correspondences, and run the above algorithm on it. By Lemma~\ref{lem:flow}, the
flow instance $F := F(\{\mathcal{C}_i\}_{i=1}^{t-1})$ is feasible with value at
most $n$. Using an algorithm of Gabow \& Tarjan \cite{DBLP:journals/siamcomp/GabowT89}, we can solve $F$ in
$O(n^3)$-time, yielding an integral flow $f$. Again in $O(n^3)$-time, we
decompose $f$ into a collection of unit flows $\{\tau_j\}_{j=1}^k$, for some $k
\leq n$, which we interpret as paths from $P_1$ to $P_t$. 

We now verify that the sequence of label functions output by the algorithm is
indeed a $\delta$-contiguous $k$-labeling for some $k \leq n$. For any $i \in
[t]$, and any $j \in [k]$ let $\tau_j(i)$ denote the $i$-th vertex in the $j$-th
path. Recall that for each $i \in [t]$, we assign each point $u \in P_i$ the set
of labels $L_i(u) = \{j : j \in [k],\, u = \tau_j(i)\}$. Note that each label in
$[k]$ is used at most once per level since for any $j$, $i \in [t]$, $\tau_j(i)$
is the only place where $\tau_j$ intersects $P_i$. Also, since each $\tau_j$
intersects all levels $i \in [t]$, each label is used at least once per level.
It follows that $\{L_i(u): u \in P_i\}$ is a partition of $[k]$. Finally, since
the edges of $F$ correspond to points that are separated by at most $\delta$ in
the ambient space, any two uses of the label $j \in [k]$ for some $i \in [t-1]$
occur within $d(\tau_j(i), \tau_j(i+1)) \leq \delta$. Thus the corresponding
sequence of $k$-labelings is indeed pairwise $\delta$-contiguous.
\end{proof}

\section{Generalized Version}\label{sec:gclus}
In this section we show that the generalized version problem is $\NP$-hard.
However, we argue that for the special case where the points of the input share a
(known) common ambient metric, the algorithm of Section~\ref{sec:lclus} gives an
approximate solution. It remains an open question as to how to find an
approximate solution in polynomial-time when there is no ambient metric (or it is
unknown).

\paragraph*{\NP-hardness.}
Let $G=(V,E)$ be an instance of $3$-coloring. We construct an instance of
\textsc{Generalized Temporal Hierarchical Clustering}, $\mathcal{M}(G)$,
consisting of two levels. For the first level let $P=\{r, g, b\}$ be a set of
three points, and let $d_P$ be a metric on $P$ such that distinct $p, \pr{p} \in
P$ have $d_P(p,\pr{p})=2$. Denote the corresponding metric space $M_P := (P,
d_P)$. For the second level we construct a metric space $M_V := (V, d_V)$, where
$d_V : V \times V \rightarrow
\mathbb{R}_{\geq 0}$, such that
\[
  d_V(u,v) =
    \begin{cases}
      2 & \textrm{if}~\{u,v\} \in E \\
      1 & \textrm{if}~\{u,v\} \not\in E ~\textrm{and}~u\neq v, \\
      0 & \textrm{otherwise}.
    \end{cases}
\]

\begin{lemma}\label{lem:3colorable-cheap}
If $G$ admits a $3$-coloring requiring $3$ colors, then $\mathcal{M}(G)$
admits a \gclus{1}{0}.
\end{lemma}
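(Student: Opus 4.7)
\medskip

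\noindent\textbf{Proof plan.}
The plan is to use the 3-coloring of $G$ to build an explicit witness: a pseudo-ultrametric $U_1$ on $P$, a pseudo-ultrametric $U_2$ on $V$, and a correspondence $\mathcal{C} \subseteq P\times V$ whose distortion is exactly $0$, with both $L^\infty$-errors bounded by $1$. First I would fix a proper $3$-coloring $c: V\to \{r,g,b\}$ that uses all three colors, and let $V_r, V_g, V_b$ denote the (nonempty) color classes.

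Next I would define the two ultrametrics as cleanly as possible. Take $U_1 := (P,\mu_1)$ with $\mu_1 = d_P$, which is already an ultrametric (any three points are at pairwise distance exactly $2$), so $L^\infty(M_P, U_1) = 0 \leq \chi = 1$. For $U_2 := (V, \mu_2)$, set $\mu_2(u,v) = 0$ if $c(u) = c(v)$ and $\mu_2(u,v) = 2$ otherwise. I would verify that $\mu_2$ is a pseudo-ultrametric by checking the three cases for any triple $u,v,w$ according to how many distinct color classes they span; in each case the strong triangle inequality holds trivially. The key $L^\infty$-bound follows from the fact that $c$ is a \emph{proper} coloring: same-class pairs are non-adjacent, so their $M_V$-distance is $0$ or $1$, while their $\mu_2$-distance is $0$, giving a gap of at most $1$; different-class pairs have $\mu_2$-distance $2$ and $M_V$-distance $1$ or $2$, again with gap at most $1$.

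Finally I would construct the correspondence $\mathcal{C} := \{(p,v) \in P\times V : v \in V_{p}\}$. Surjectivity onto $V$ is immediate since every vertex has a color, and surjectivity onto $P$ uses exactly the hypothesis that all three colors are required. For the distortion, take any two pairs $(p,v), (\pr{p}, \pr{v}) \in \mathcal{C}$. If $p = \pr{p}$, then $v, \pr{v} \in V_p$, so $\mu_1(p,\pr{p}) = 0 = \mu_2(v,\pr{v})$; if $p \neq \pr{p}$, then $v, \pr{v}$ lie in different color classes, so $\mu_1(p,\pr{p}) = 2 = \mu_2(v,\pr{v})$. Hence $\dis(\mu_1,\mu_2;\mathcal{C}) = 0 \leq \rho$.

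I do not expect any real obstacle: the whole argument is a verification. The one subtlety worth being explicit about is that setting $\mu_2$ to $0$ on same-class pairs requires working with a \emph{pseudo}-ultrametric rather than an ultrametric (which Definition~\ref{def:htcgen} allows), and that this collapsing of same-class pairs is precisely what forces distortion $0$ along $\mathcal{C}$; if one insisted on strictly positive $\mu_2$ for same-class pairs, the distortion against $\mu_1(p,p)=0$ would be nonzero.
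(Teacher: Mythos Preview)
Your proof is correct and follows essentially the same approach as the paper: build a pseudo-ultrametric on $V$ that collapses each color class to a point, and use the coloring itself as the correspondence. The only cosmetic difference is in the chosen scale: the paper sets inter-class distance to $1$ in both $U_P$ and $U_V$ (so $L^\infty(M_P,U_P)=1$ and $L^\infty(M_V,U_V)\leq 1$), whereas you set it to $2$ (so $L^\infty(M_P,U_1)=0$ and $L^\infty(M_V,U_2)\leq 1$); both choices yield distortion $0$ and $\chi\leq 1$.
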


\begin{proof}
Fix a $3$-coloring of $G=(V,E)$. We will exhibit a pair of pseudometric spaces
and a $0$-distortion correspondence between them. For the first space let $U_P =
(P, \mu_P)$ be a uniform metric space where distinct points are at a distance of
$1$. Note that $L^\infty(M_P, U_P) = \max_{u, \pr{u} \in P} \abs{d_P(u, \pr{u})
- \mu_P(u, \pr{u})}=1$, since for any distinct $u, \pr{u} \in P$, $\abs{d_P(u,
\pr{u}) - \mu_P(u, \pr{u})} = \abs{2 - 1} = 1$.

We will use the points of $P$ to denote the color class of $v \in V$. Fix $c : V
\rightarrow P$ be such that $c(v) = c(\pr{v})$ if and only if $v, \pr{v}$ share
the same color class. Let $U_V = (V, \mu_V)$ be the pseudometric space where for
any $v, \pr{v} \in V$, $\mu_V(v, \pr{v}) = 1$ if and only if $c(v) \neq
c(\pr{v})$, and $\mu_V(v, \pr{v}) = 0$ otherwise. We now bound $L^\infty(M_V,
U_V)$ by considering $\abs{d_V(v, \pr{v}) - \mu_V(v, \pr{v})}$ for an arbitrary
pair $v, \pr{v} \in V$. Since $\mu_V(v, v) = d_V(v, v) = 0$ for any $v \in V$,
only distinct $v, \pr{v}$ can contribute to the distortion. Suppose $\{v,
\pr{v}\} \in E$, then $c(v) \neq c(\pr{v})$ and thus $\abs{d_V(v, \pr{v}) -
\mu_V(v, \pr{v})} = \abs{2 - 1} = 1$. Otherwise, $\{v, \pr{v}\} \not \in E$, and
$d_V(v, \pr{v}) = 1$ while $\mu_V(v, \pr{v}) \leq 1$ so that $\abs{d_V(v,
\pr{v}) - \mu_V(v, \pr{v})} \leq 1$. Thus $L^\infty(M_V, U_V) \leq 1$.

Last, let $\mathcal{C} = \{(p, v) \in P \times V : c(v) = p \}$. We now verify
that $\mathcal{C}$ is a $0$-distortion correspondence. To see that $\mathcal{C}
\in \corr(P, V)$, note that $\pi_P(\mathcal{C}) = P$ since $G$ requires $3$
colors, and $\pi_{V}(\mathcal{C}) = V$ since every vertex $v \in V$ belongs to a
color class. Finally, to bound $\dis(\mu_P, \mu_V ; \mathcal{C})$ note that for
any $(p, v),(\pr{p}, \pr{v}) \in \mathcal{C}$, either $p = \pr{p}$ and
$\abs{\mu_P(p,\pr{p}) - \mu_V(v, \pr{v})}=\abs{\mu_V(v, \pr{v})} = 0$ (since
$c(v) = c(\pr{v})$), or $p \neq \pr{p}$ and $\abs{\mu_P(p,\pr{p}) - \mu_V(v,
\pr{v})}=\abs{1-1} = 0$.
\end{proof}

\begin{lemma}\label{lem:no3coloring-expensive}
If $G$ does not admit a $3$-coloring, then $\mathcal{M}(G)$ does not admit
a \gclus{2}{0}.
\end{lemma}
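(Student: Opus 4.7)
The plan is to prove the contrapositive: given a \gclus{2}{0} $(U_P, U_V, \mathcal{C})$ of $\mathcal{M}(G)$, with $U_P = (P, \mu_P)$ and $U_V = (V, \mu_V)$, extract a proper $3$-coloring of $G$. The first step is to read off the rigidity imposed by zero distortion: for every pair $(p, v), (p', v') \in \mathcal{C}$ one has the exact equality $\mu_P(p, p') = \mu_V(v, v')$. Specializing to $v = v'$ forces $\mu_P(p, p') = 0$ whenever two points of $P$ are both in correspondence with the same $v$; symmetrically, $\mu_V(v, v') = 0$ whenever two points of $V$ share a common partner in $P$.

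Next I would extract a candidate coloring $c : V \to P$ by selecting, for each $v \in V$, an arbitrary $c(v)$ with $(c(v), v) \in \mathcal{C}$. Because $\pi_P(\mathcal{C}) = P$, every element of $P = \{r, g, b\}$ is in the image of $c$, so $c$ uses at most three colors.

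The heart of the proof is showing that $c$ is proper. Suppose for contradiction that some edge $\{v, v'\} \in E$ is monochromatic under $c$, say $c(v) = c(v') = p$. Applying the zero-distortion identity to $(p, v), (p, v') \in \mathcal{C}$ yields $\mu_V(v, v') = \mu_P(p, p) = 0$, and combined with $d_V(v, v') = 2$ this gives $|d_V(v, v') - \mu_V(v, v')| = 2$. I would then convert this into a violation of the ambient constraints by playing the $L^\infty$ bound on $M_P$ off against the one on $M_V$: since $c$ is surjective, $\mathcal{C}$ connects all three color classes of $P$ nontrivially, so by the ultrametric inequality on a triple in $U_P$ together with $L^\infty(M_P, U_P) \leq 2$, the values $\mu_P(r,g), \mu_P(g,b), \mu_P(r,b)$ are pinned enough that a monochromatic edge cannot coexist with feasibility on the other side of the correspondence.

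The main obstacle I expect is handling the degenerate case in which $\mu_P$ collapses off-diagonal distances towards zero: I would use the pseudo-ultrametric inequality on a carefully chosen triple of vertices of $V$ drawn from different preimages $c^{-1}(r), c^{-1}(g), c^{-1}(b)$, together with the values $d_V \in \{1, 2\}$ on non-edges and edges respectively, to show that such a collapse is incompatible with a monochromatic edge while still meeting $L^\infty(M_V, U_V) \leq 2$. Once this case is dispatched, the surjective map $c$ is a proper $3$-coloring of $G$, contradicting the hypothesis and completing the proof.
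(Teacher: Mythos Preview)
Your overall architecture (prove the contrapositive, use zero distortion to get $\mu_P(p,p') = \mu_V(v,v')$ on corresponding pairs, then read off a coloring $c:V\to P$) is exactly the paper's. But there is a real gap in how you handle the bound on $\chi$.

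You take the hypothesis to be $L^\infty \le 2$. With that reading the ``degenerate case'' you flag is not a technicality to be dispatched --- it is a genuine counterexample. Set $\mu_P\equiv 0$ and $\mu_V\equiv 0$ (both are pseudo-ultrametrics) and take $\mathcal{C}=P\times V$. Then $L^\infty(M_P,U_P)=L^\infty(M_V,U_V)=2$ and $\dis(\mu_P,\mu_V;\mathcal{C})=0$, so \emph{every} $G$, including $K_4$, admits a \gclus{2}{0} in this non-strict sense. No amount of playing the ultrametric inequality against surjectivity can rescue the argument, because the conclusion you are aiming for is simply false at $\chi=2$.

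The paper's proof avoids this by working under the strict assumption $\chi<2$ (the lemma is really the assertion that no \gclus{\chi}{0} exists for $\chi<2$). Strictness does all the work that your vague final paragraph is reaching for: it forces $\mu_P(p,p')>0$ for distinct $p,p'$ (otherwise $|d_P-\mu_P|=2$), which in turn makes $c$ genuinely a function (each $v$ has a \emph{unique} partner in $P$, since two partners $p\neq p'$ would give $0=\mu_V(v,v)=\mu_P(p,p')>0$); and it forces $\mu_V(v,v')>0$ for every edge $\{v,v'\}\in E$. Then a monochromatic edge yields $0=\mu_P(c(v),c(v))\neq \mu_V(v,v')$, contradicting zero distortion directly --- no further ultrametric juggling needed. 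Replace your $\le 2$ with $<2$ and drop the last two paragraphs in favor of this clean separation argument.
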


\begin{proof}
Let $(V,E) = G$. Fix a \gclus{\chi}{0} of $\mathcal{M}(G)$ for some $\chi < 2$
consisting of ultrametrics $U_P=(P, \mu_P)$, $U_V=(V, \mu_V)$, and a
$0$-distortion correspondence $\mathcal{C} \in \corr(P, V)$. We first argue that
the points of $P$ are separated. Let $p, \pr{p} \in P$, $p \neq \pr{p}$. If
$\mu_P(p, \pr{p}) = 0$ then $L^\infty(M_P, U_P) \geq \abs{\mu_P(p, \pr{p}) -
d_P(p, \pr{p})} = |0 - 2| = 2$. Thus $\chi \geq 2$, a contradiction.

Now fix a map $c:V\rightarrow P$, such that for any $v \in V$, $c(v) = p$ such
that $(p, v) \in \mathcal{C}$.  First we argue that $c$ is indeed a function
by showing that for any $v \in V$, $v$ corresponds to exactly one point in $P$.
To see why observe that given any $(p, v), (\pr{p}, v) \in \mathcal{C}$ with $p
\neq \pr{p}$ it follows that $0 = \dis(\mu_P, \mu_V; \mathcal{C}) \geq
\abs{\mu_P(p, \pr{p}) - \mu_V(v,v)} = \mu_P(p, \pr{p}) > 0$. We now show how to
use $c$ to construct a $3$-coloring of $G$. Since $\chi < 2$, for every $\{u,v\}
\in E$, we have $\mu_V(u,v) > 0$, as otherwise $\chi \geq L^\infty(M_V, U_V)
\geq |d_V(u,v) - \mu_V(u,v)| = 2$. Consider any pair of corresponding points
$(c(u), u), (c(v), v) \in \mathcal{C}$. It must be the case that $c(u) \neq
c(v)$ as otherwise $\dis(\mu_P, \mu_V; \mathcal{C}) \geq \abs{\mu_P(c(u), c(v))
- \mu_V(u,v)} = \mu_V(u,v) > 0$. Color the graph by assigning each $v \in V$ to
a color class given by $c(v)$. Since for adjacent $u, v \in V$, we have
$\mu_V(u,v) > 0$, it follows that $c(u) \neq c(v)$, and thus there is no edge
between vertices of the same color. We have exhibited a $3$-coloring of $G$.
\end{proof}

Theorem~\ref{thm:hard} result follows directly from
Lemma~\ref{lem:3colorable-cheap}, and Lemma~\ref{lem:no3coloring-expensive}. The
proof also implies that for the \textsc{Generalized Temporal Hierarchical
Clustering} problem, for some fixed $\rho$, approximating $\chi$ within any
factor smaller than $2$ is $NP$-hard.

\begin{theorem}\label{thm:hard}
The \textsc{Generalized Temporal Hierarchical Clustering} problem is $NP$-hard.
\end{theorem}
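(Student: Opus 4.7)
The plan is to combine Lemma~\ref{lem:3colorable-cheap} and Lemma~\ref{lem:no3coloring-expensive} into a polynomial-time reduction from $3$-coloring, which is a classical $\NP$-hard problem. Given an instance $G=(V,E)$ of $3$-coloring, I would use the construction $\mathcal{M}(G)$ already defined in this section: the first level is the three-point metric space $M_P$ with all pairwise distances equal to $2$, and the second level is the metric space $M_V$ where edges of $G$ have distance $2$ and non-edges (between distinct vertices) have distance $1$. This construction is clearly computable in polynomial time in $|V|+|E|$.

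Next I would argue the two-way correctness of the reduction. If $G$ is $3$-colorable then by Lemma~\ref{lem:3colorable-cheap} the instance $\mathcal{M}(G)$ admits a \gclus{1}{0}, in particular a \gclus{2}{0}. Conversely, if $G$ is not $3$-colorable then by Lemma~\ref{lem:no3coloring-expensive} no \gclus{2}{0} exists. Thus deciding whether $\mathcal{M}(G)$ admits a \gclus{2}{0} is exactly deciding whether $G$ is $3$-colorable, proving $\NP$-hardness. The only small caveat to handle is that Lemma~\ref{lem:3colorable-cheap} is phrased for graphs requiring exactly $3$ colors; if $G$ is $2$-colorable (bipartite), $3$-coloring is trivial, so one can assume without loss of generality that the reduction is from an instance known to be non-bipartite, or equivalently from the restriction of $3$-coloring to non-bipartite graphs, which remains $\NP$-hard.

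I do not expect a serious obstacle here: the two lemmas already perform the essential work by exhibiting the feasible solution in one direction and ruling out any feasible solution with $\chi < 2$ in the other. The only thing to write cleanly in the final proof is the packaging of these two implications as an equivalence, together with the observation that the reduction runs in polynomial time. I would also note, as the paper remarks, that the gap between $\chi=1$ (achievable) and $\chi=2$ (infeasible) directly yields an inapproximability statement: for any fixed $\rho \geq 0$, approximating the minimum $\chi$ in a \gclus{\chi}{\rho} within any factor strictly less than $2$ is $\NP$-hard, via the same reduction.
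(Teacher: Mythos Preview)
Your proposal is correct and follows essentially the same approach as the paper, which simply states that the theorem follows directly from Lemma~\ref{lem:3colorable-cheap} and Lemma~\ref{lem:no3coloring-expensive}. You supply more detail than the paper does, including a clean handling of the ``requiring $3$ colors'' hypothesis in Lemma~\ref{lem:3colorable-cheap} and the explicit inapproximability consequence, both of which are in line with the paper's surrounding discussion.
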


\paragraph*{Approximation by local version.} We now show that any
\lclus{\chi}{\delta} is a \gclus{\chi}{2\chi + 2\delta}. That is, we can view
the local version of the problem as an approximation to the general version in
the special case that the points of the input come from the same metric space. 

\begin{lemma}\label{lem:bdddist}
Let $P$ be a temporal-sampling. Any \lclus{\chi}{\delta} of $P$ is a \gclus{\chi}{2\chi + 2\delta} of $P$.
\end{lemma}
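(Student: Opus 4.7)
The plan is to note that the pseudo-ultrametrics already carry over unchanged: by Definition~\ref{def:htclocal}, a \lclus{\chi}{\delta} provides $\{U_i=(P_i,\mu_i)\}_{i=1}^t$ with $L^\infty(M[P_i],U_i)\leq \chi$, which is exactly the requirement placed on the ultrametrics by Definition~\ref{def:htcgen} (the ambient metric plays the role of $M_i$). So only the correspondences need attention; we must show that every $\delta$-local correspondence $\mathcal{C}_i$ has merge distortion $\dis(\mu_i,\mu_{i+1};\mathcal{C}_i)\leq 2\chi+2\delta$.

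To bound this distortion I would fix an index $i$ and two arbitrary related pairs $(u,v),(u',v')\in\mathcal{C}_i$, and then use a standard three-term triangle-inequality decomposition,
\[
\abs{\mu_i(u,u')-\mu_{i+1}(v,v')}
\leq \abs{\mu_i(u,u')-d(u,u')}+\abs{d(u,u')-d(v,v')}+\abs{d(v,v')-\mu_{i+1}(v,v')}.
\]
The first and third terms are at most $\chi$ by the $L^\infty$-closeness of $U_i$ and $U_{i+1}$ to the ambient metric on $P_i$ and $P_{i+1}$ respectively. For the middle term I would apply the (reverse) triangle inequality in the ambient metric together with the $\delta$-locality of $\mathcal{C}_i$, giving $\abs{d(u,u')-d(v,v')}\leq d(u,v)+d(u',v')\leq 2\delta$. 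Summing yields the desired $2\chi+2\delta$ bound, which holds uniformly over $(u,v),(u',v')$ and over $i$.

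There is no real obstacle here; the only thing to watch is that we are not asked to re-prove the $\chi$ bound, and that the middle step genuinely uses the $\delta$-locality symmetrically on both endpoints, which is why the slack is $2\delta$ rather than $\delta$. Once these three estimates are combined the conclusion follows directly from Definition~\ref{def:htcgen}.
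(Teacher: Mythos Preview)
Your proposal is correct and essentially identical to the paper's own proof: both fix $i$ and a pair $(u,v),(u',v')\in\mathcal{C}_i$, use the triangle inequality to split $\abs{\mu_i(u,u')-\mu_{i+1}(v,v')}$ into a $2\chi$ contribution from the $L^\infty$-closeness of $U_i,U_{i+1}$ to the ambient metric and a $2\delta$ contribution from $\abs{d(u,u')-d(v,v')}\leq d(u,v)+d(u',v')$. If anything, your three-term decomposition is more explicit than the paper's phrasing, but the argument is the same.
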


\begin{proof}
Suppose $P$ has length $t$ and ambient metric space $M=(X,d)$. Fix a \lclus{\chi}{\delta} of $P$ with ultrametrics
$\{U_i=(P_i, \mu_i)\}_{i=1}^t$, and correspondences, $\{\mathcal{C}_i\}_{i=1}^{t-1}$, induced by labelings of successive pairs of levels.
Observe that
\[
  \max_{i \in [t-1]} \dis(\mu_i, \mu_{i+1}, \mathcal{C}_i)
     = \max_{i \in [t-1]} \max_{(x,y),(\pr{x}, \pr{y}) \in \mathcal{C}_i}
       \abs{\mu_i(x,\pr{x}) - \mu_{i+1}(y,\pr{y})}.
\]
Since $\chi \geq \max_{i \in [t]} L^\infty(M[P_i], U_i)$, it follows by definition
of $L^\infty$ that $\chi \geq \abs{\mu_i(x, \pr{x}) - d(x, \pr{x})}$ for any $i
\in [t]$, $x,\pr{x} \in P_i$. Fix an arbitrary $i \in [t-1]$ and let $(x,y),
(\pr{x}, \pr{y}) \in \mathcal{C}_i$. By triangle inequality $\abs{\mu_i(x,\pr{x})
- \mu_{i+1}(y,\pr{y})} \leq \abs{d(x,\pr{x}) - d(y,\pr{y})} + 2\chi$. Note that
since $(x,y), (\pr{x}, \pr{y}) \in \mathcal{C}_i$, we have $d(x,y), d(\pr{x},
\pr{y}) \leq \delta$. Thus $y$, $\pr{y} \in X$ are contained in $\delta$-balls
of $x$, $\pr{x}$ in $X$ (respectively). It follows that $\abs{d(x,\pr{x}) - d(y,
\pr{y})} \leq 2\delta$. We conclude that for any $i \in [t-1]$, $(x,y), (\pr{x},
\pr{y}) \in \mathcal{C}_i$, $\abs{\mu_i(x,\pr{x}) - \mu_{i+1}(y,\pr{y})} \leq 2\chi
+ 2\delta$, and thus $\max_{i \in [t-1]} \dis(\mu_i, \mu_{i+1}; \mathcal{C}_i) \leq
2\chi + 2\delta$.
\end{proof}

\section{Stability}\label{sec:stab}
In this section we show that the algorithm for finding an $\ell^\infty$-nearest
ultrametric in~\cite{DBLP:conf/stoc/FarachKW93} is unstable under perturbations
of the metric and, consequently, so are our algorithms. Stability, naturally, is
a desirable property; as otherwise if small changes in the input are allowed to
produce vastly different ultrametrics, then the observed temporal coherence of
the output is lost. Furthermore, this is the case even if the cost of fitting
each level to an ultrametric remains best possible. We resolve this issue in
practice by instead finding the $\ell^\infty$-nearest subdominant ultrametric.

\paragraph*{Subdominant ultrametrics.} Let $M=(X,d)$ be a metric space. We will consider $M$ to be a complete graph
where the edges are weighted by distance, and use the notation $T_M$ to refer to
a minimum spanning tree on $M$. Further, for any $x, y \in M$, let $T_M(x,y)$
denote the unique path joining $x, y \in M$.
Let $\mathcal{U}(M)$ denote the set of ultrametrics on the points of $M$. Let
$
  \mathcal{U}_\leq(M) = \{(X, \mu) \in \mathcal{U}(M) : \mu(x,y) \leq d(x,y) ~\textrm{for all}~ x,y \in M \}.
$
In other words, $\mathcal{U}_\leq(M)$ is the set of ultrametrics on the points
of $M$ such that no distance is made larger than its counterpart in $M$. We say
that an ultrametric in $\mathcal{U}_\leq(M)$ is \emph{subdominant} to $M$. Let
$\sub{\mu}(M)=(U, \mu)$ be a metric space on the points of $M$ with distance
function
$
  \mu(x,y) = \max_{\{u,v\} \in T_M(x,y)} M(u,v).
$
The distance function $\mu$ is independent of the choice of minimum spanning tree, and
easily verified to be ultrametric and subdominant to $M$. It can further be
shown that $\sub{\mu}(M)$ is the unique, $\ell^\infty$-closest subdominant ultrametric to
$M$. That is,
$
  \mu_S(M) = \argmin_{U \in \mathcal{U}_\leq(M)} L^\infty(U, M).
$

\paragraph*{Instability.} We now show that the algorithms of
Section~\ref{sec:lclus}, Section~\ref{sec:clus} are unstable. To elucidate why
we now restate the algorithm in~\cite{DBLP:conf/stoc/FarachKW93} in a slightly
modified form which helps to make our point. This procedure is equivalent to the
following:
\begin{description}[itemsep=0pt]
  \item{\textbf{Step 1: Compute a minimum spanning tree.}} 
    Given a metric space $M=(X,d)$ consider a weighted complete graph on $X$
    where the the weight of any edge $\{x, \pr{x}\}$ is $d(x,\pr{x})$. Find a
    minimum spanning tree of this graph, $T_M$.

  \item{\textbf{Step 2: Compute cut-weights for each edge.}} 
    Let $(X, \mu) = \mu_S(M)$. For each edge $e=\{u, v\} \in T_M$, compute and assign a priority $p(e)$
    to $e$ such that
    $
      p(e) = \max_{x,\pr{x} \in X} \{d(x,\pr{x}) : e \in T_M(x,\pr{x}),\, \mu(x,\pr{x}) = d(u,v) \},
    $

  \item{\textbf{Step 3: Assign distances.}} 
    Edges are cut in order of descending priority. Any pair of vertices $u,v
    \in T_M$ first separated by a cut at $e$ are assigned a distance of
    $p(e)-\frac{1}{2}L^\infty(M, \mu_S(M))$.

\end{description}
When an edge is cut, points first separated by the removal of that edge are
assigned a distance which depends on its largest supported distance in $M$. The
issue is that small perturbations in the metric can change the path structure of
$T_M$ so that an edge becomes responsible for linking a far pair of points. The
only hope for stability is that the other term in the assigned distance,
$\frac{1}{2}L^\infty(M, \mu_S(M))$, changes enough to offset this effect.
However, Lemma~\ref{lem:subulmstable} shows that this term is stable, and
thus is not large enough to compensate. It follows that the above procedure is
unstable. See Figure~\ref{fig:diagram-unstable-nearest-ulm} for a concrete
example.

\begin{figure}
\centering
\includegraphics[width=1\linewidth]{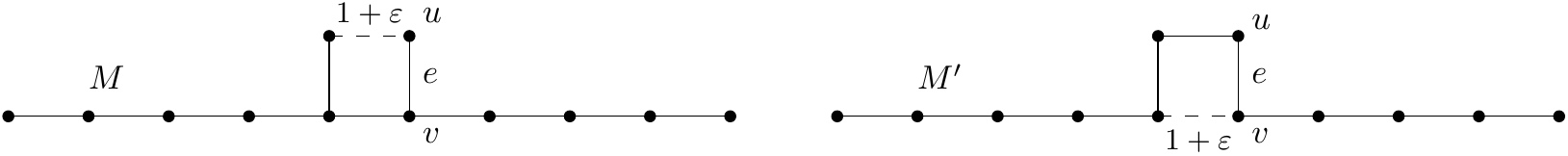}
\caption{\label{fig:diagram-unstable-nearest-ulm}
Two metric graphs $M$, $\pr{M}$ which differ by an $\eps$-perturbation. Solid
edges have length $1$ and appear in their respective MSTs. Dashed edges have
length $1+\eps$. In $M$, $p(e) = 6$, and the priority of any edge along the
bottom of $M$ is $\diam(M)=9$. Let $U$, $\pr{U}$ denote the result from running
the algorithm in~\cite{DBLP:conf/stoc/FarachKW93} on $M$, and $\pr{M}$,
respectively. In computing $U$ from $M$ the edge $e$ is cut after all of the
edges along the base, and thus $u,v$ are assigned distance of
$6-\frac{1}{2}L^\infty(M,\mu_S(M))=6-\frac{1}{2}(9-1) = 2$. Compare with
$\pr{M}$, where the priority $p$ of any edge of the MST is $p = p(e) =
\diam(\pr{M}) = 9+\eps$. Thus $u, v$ are assigned distance of
$9+\eps-\frac{1}{2}L^\infty(\pr{M},\mu_S(\pr{M}))=5+\eps/2$. By considering $n$
point metric spaces with bases of length $n-2$, this example generalizes to show
$L^\infty(U, \pr{U}) = \Omega(\diam(M))$.
}
\end{figure}

\paragraph*{Ensuring stability.}
In contrast, the $\ell^\infty$-nearest subdominant ultrametric is
stable under metric perturbations. We now give a simple, direct proof of this fact for our setting. See~\cite{DBLP:journals/jmlr/CarlssonM10} for extended discussion.

\begin{lemma}\label{lem:subulmstable}
Let $M$, $\pr{M}$ be metric spaces on the same points such that
$L^\infty(M, \pr{M}) \leq \eps$, then $L^\infty(\mu_S(M),\mu_S(\pr{M})) \leq \eps$.
\end{lemma}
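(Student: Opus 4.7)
\medskip

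\noindent\textbf{Proof plan.}
The plan is to exploit the bottleneck (minimax) characterization of the subdominant ultrametric. Recall that $\mu_S(M)(x,y) = \max_{\{u,v\} \in T_M(x,y)} d(u,v)$ for any minimum spanning tree $T_M$; a standard fact is that this is equivalently
\[
  \mu_S(M)(x,y) = \min_{\pi : x \leadsto y} \max_{\{u,v\} \in \pi} d(u,v),
\]
where $\pi$ ranges over all paths from $x$ to $y$ in the complete graph on the common point set. Once this characterization is in hand, the perturbation bound becomes a one-line argument about path bottlenecks.

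\medskip

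\noindent First, I would fix an arbitrary pair of points $x,y$ and let $\mu = \mu_S(M)$, $\pr\mu = \mu_S(\pr M)$. Choose a path $\pi$ realizing the minimax value $\mu(x,y)$ in $M$; so every edge $\{u,v\}$ of $\pi$ satisfies $d(u,v) \leq \mu(x,y)$. Because $L^\infty(M,\pr M) \leq \eps$, the same path interpreted as a path in $\pr M$ has every edge of weight at most $d(u,v) + \eps \leq \mu(x,y) + \eps$. Using $\pi$ as a candidate in the minimax formula for $\pr\mu(x,y)$ yields $\pr\mu(x,y) \leq \mu(x,y) + \eps$. Swapping the roles of $M$ and $\pr M$ gives $\mu(x,y) \leq \pr\mu(x,y) + \eps$. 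Combining, $|\mu(x,y) - \pr\mu(x,y)| \leq \eps$, and since $x,y$ were arbitrary we conclude $L^\infty(\mu_S(M), \mu_S(\pr M)) \leq \eps$.

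\medskip

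\noindent The main obstacle, if any, is justifying the bottleneck formula for $\mu_S(M)$ without appealing to an external reference: one direction follows by noting that the MST path witnesses an upper bound on the minimax, and the other direction follows from the cycle property of the MST (any non-tree edge is the heaviest on the unique cycle it closes in $T_M$, so no path can beat the MST path in the bottleneck sense). Everything else is symmetric substitution of edge weights, and no further calculation is needed.
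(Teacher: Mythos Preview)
Your proof is correct and follows essentially the same approach as the paper: both establish the minimax (bottleneck) characterization $\mu_S(M)(x,y) = \min_{\pi:x\leadsto y}\max_{\{u,v\}\in\pi} d(u,v)$ via the MST cycle/optimality property, and then observe that an $\eps$-perturbation of all edge weights moves the minimax value by at most $\eps$. The paper compresses the last step into a single sentence, whereas you spell out the witness-path-as-candidate argument and the symmetric swap explicitly, but the underlying argument is identical.
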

\begin{proof}
Let $P$ denote the points of $M$. Fix a distance weighted MST of $M$, $T_M$, and let $(P, \mu) = \mu_S(M)$, $(P, \pr\mu)=\mu_S(\pr{M})$. For any
pair of points $x, y \in P$ let $\mathcal{P}(x,y)$ denote the set of all simple
paths $x \leadsto y$ in $M$ (when $M$ is viewed as a complete graph). Let $w :
\mathcal{P}(x,y) \rightarrow \mathbb{R}_{\geq 0}$ be the function that sends
each path in $\mathcal{P}(x,y)$ to the value of its maximum weight edge. Observe
that the maximum weight edge along $T_M(x,y)$ is equal to $\min_{\gamma \in
\mathcal{P}(x,y)} w(\gamma)$, as otherwise it is possible to construct a
spanning tree with cost strictly less than that of $T_M$. Thus, $\mu(x,y) =
\min_{\gamma \in \mathcal{P}(x,y)} w(\gamma)$. Now since $M$, $\pr{M}$ differ by
an $\eps$-perturbation, the values individual edges of the paths (and therefore
the values of the paths in $\mathcal{P}(x,y)$ under $w$) change by at most
$\eps$. Thus, $|\mu(x,y) - \pr\mu(x,y)| \leq \eps$
\end{proof}

Such a choice for ultrametric embedding is suboptimal, but
the next lemma shows that it is within a factor of $2$ of optimal. This fact essentially follows from arguments
in~\cite{DBLP:conf/stoc/FarachKW93}, but we give a proof by different means.

\begin{lemma}[\cite{DBLP:conf/stoc/FarachKW93}]\label{lem:subdomapprox}
Let $M$ be a finite metric space and $U \in \mathcal{U}(M)$, then
$L^\infty(\sub{\mu}(M), M) \leq 2 L^\infty(U, M)$.
\end{lemma}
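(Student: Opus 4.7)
The plan is to prove the one-sided inequality $d(x,y) - \mu(x,y) \leq 2 L^\infty(U, M)$ for every pair $x, y$, where $(X, \mu) := \mu_S(M)$. Since $\mu_S(M)$ is subdominant we have $\mu(x,y) \leq d(x,y)$, so $|d(x,y) - \mu(x,y)| = d(x,y) - \mu(x,y)$ and this suffices to bound $L^\infty(\mu_S(M), M)$.

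Fix $x, y \in X$ and let $\eps := L^\infty(U, M)$, with $U = (X, \nu)$. Take a distance-weighted MST $T_M$ of $M$ and let $x = x_0, x_1, \dots, x_k = y$ be the vertices of $T_M(x,y)$. By the very definition of $\mu_S(M)$ given in the ``Subdominant ultrametrics'' paragraph, $\mu(x,y) = \max_{0 \leq i < k} d(x_i, x_{i+1})$.

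Next I would use the strong triangle inequality for the ultrametric $\nu$, applied inductively along the MST path, to obtain
\[
  \nu(x,y) \leq \max_{0 \leq i < k} \nu(x_i, x_{i+1}).
\]
Now comes the step that glues the two ultrametrics together via $M$: for each $i$, the bound $L^\infty(U, M) \leq \eps$ gives $\nu(x_i, x_{i+1}) \leq d(x_i, x_{i+1}) + \eps$, so taking the max over $i$ yields $\nu(x, y) \leq \mu(x, y) + \eps$. On the other hand, again by $L^\infty(U, M) \leq \eps$, we have $\nu(x,y) \geq d(x,y) - \eps$. Chaining these two inequalities gives $d(x,y) - \mu(x,y) \leq 2\eps$, and maximizing over $x, y$ proves the claim.

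The only delicate point, and the heart of the argument, is the reduction from an arbitrary ultrametric $\nu$ to the MST-based quantity $\mu$: the MST path is chosen because its maximum edge weight \emph{is} $\mu(x, y)$, and the ultrametric inequality lets us bound $\nu(x, y)$ by the maximum of $\nu$ along exactly this path rather than along the (unknown) ultrametric structure of $U$. Once that observation is in place, everything reduces to two applications of the $L^\infty(U, M) \leq \eps$ bound, and no properties of $U$ beyond being an ultrametric are used.
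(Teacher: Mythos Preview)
Your proof is correct. It is, however, a genuinely different argument from the one the paper gives.

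The paper's proof is indirect: it first passes to an $\ell^\infty$-nearest ultrametric $U^*$, writes $M = U^* + \overline{\eps}$ for an $\eps$-perturbation, and then invokes the stability lemma (Lemma~\ref{lem:subulmstable}) together with the fact that an ultrametric is its own subdominant ultrametric to conclude $\mu_S(M) = U^* + \pr{\overline{\eps}}$ for another $\eps$-perturbation. The 2-approximation then drops out by the triangle inequality for $L^\infty$. In contrast, your argument is entirely direct and pairwise: you fix $x,y$, walk the MST path, use the ultrametric inequality for $\nu$ along that path, and apply the $L^\infty(U,M)\leq\eps$ bound twice. This never mentions $U^*$ or stability at all. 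What the paper's route buys is a clean conceptual story---the 2-approximation is exhibited as a corollary of stability---while your route is more elementary and self-contained, and in fact closer in spirit to the original argument of Farach-Colton, Kannan, and Warnow that the paper is deliberately avoiding.
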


\begin{proof}
Let $U^*$ denote an $\ell^\infty$-nearest ultrametric on the points of $M$, and
suppose that $L^\infty(M,U^*) = \eps$. It follows immediately that there exists
an $\eps$-perturbation of $M$, $\overline{\eps}$, such that $M = U^* +
\overline{\eps}$. Now,
$
  \mu_S(M) = \mu_S(U^* + \overline{\eps})
           = \mu_S(U^*) + \pr{\overline{\eps}}
           = U^* + \pr{\overline{\eps}},
$
for some other $\eps$-perturbation $\pr{\overline{\eps}}$. Here, the second
equality follows by stability (Lemma~\ref{lem:subulmstable}), and the third
follows from the fact that $U^*$ is its own $\ell^\infty$-nearest subdominant
ultrametric. Thus,
$
  L^\infty(\mu_S(M), M) 
      = L^\infty(U^* + \pr{\overline{\eps}}, M)
    \leq L^\infty(U^*, M) + \eps
      = 2 L^\infty(U^*, M).
$
The proof follows by noting that = $L^\infty(U^*, M) \leq L^\infty(U, M)$.
\end{proof}

As one might expect, using the $2$-approximate algorithm $\mu_S$ for
$\mathcal{A}$ in the algorithm of Section~\ref{sec:lclus} results in a
\lclus{2\chi}{\delta} whenever the input admits a \lclus{\chi}{\delta}.
Lemma~\ref{lem:bdddist} then implies that the result is a \gclus{2\chi}{4\chi +
2\delta}. However, since the error incurred by $\mu_S$ is one-sided, there is no
additional loss in the coupling distortion and the result
is a \gclus{2\chi}{2\chi + 2\delta}.

\begin{proof}[Proof sketch]
Let $P$ be a temporal-sampling of length $t$ from a metric space $M$ with
metric $d$.
Suppose that $U_i=(P_i, \mu_i)$ is a subdominant pseudo-ultrametric to $M[P_i]$ for all $i \in [t]$. Then, for
any $x, \pr{x} \in P_i$, $0 \leq d(x,\pr{x})-\mu_i(x,\pr{x}) \leq \chi$. That
is, $\mu_i(x,\pr{x}) \in [d(x, \pr{x})-\chi, d(x,\pr{x})]$. Thus, for any $i \in
[t-1]$, $x,\pr{x} \in P_i$, $y,\pr{y} \in P_{i+1}$ it follows that
$\abs{\mu_i(x,\pr{x}) - \mu_{i+1}(y,\pr{y})} \leq \abs{d(x,\pr{x}) - d(y,
\pr{y})} + \chi$. Using this bound in the proof of Lemma~\ref{lem:bdddist}
gives the desired result.
\end{proof}

\section{Example Output}\label{sec:exp}
In Figure~\ref{fig:boids}, we present
output based on synthetic data. For expository purposes we seek a data source
for which many levels can reasonably be described as hierarchical, yet
changes enough that the hierarchy evolves over time. We obtain such input by
regularly saving snapshots of actor positions from a flocking simulation.
A labeled clustering is obtained using the algorithm of Section~\ref{sec:clus}
and fitting by subdominant ultrametrics.

\begin{figure}%
\centering
\begin{subfigure}{0.25\linewidth}
\centering
\includegraphics[width=\linewidth]{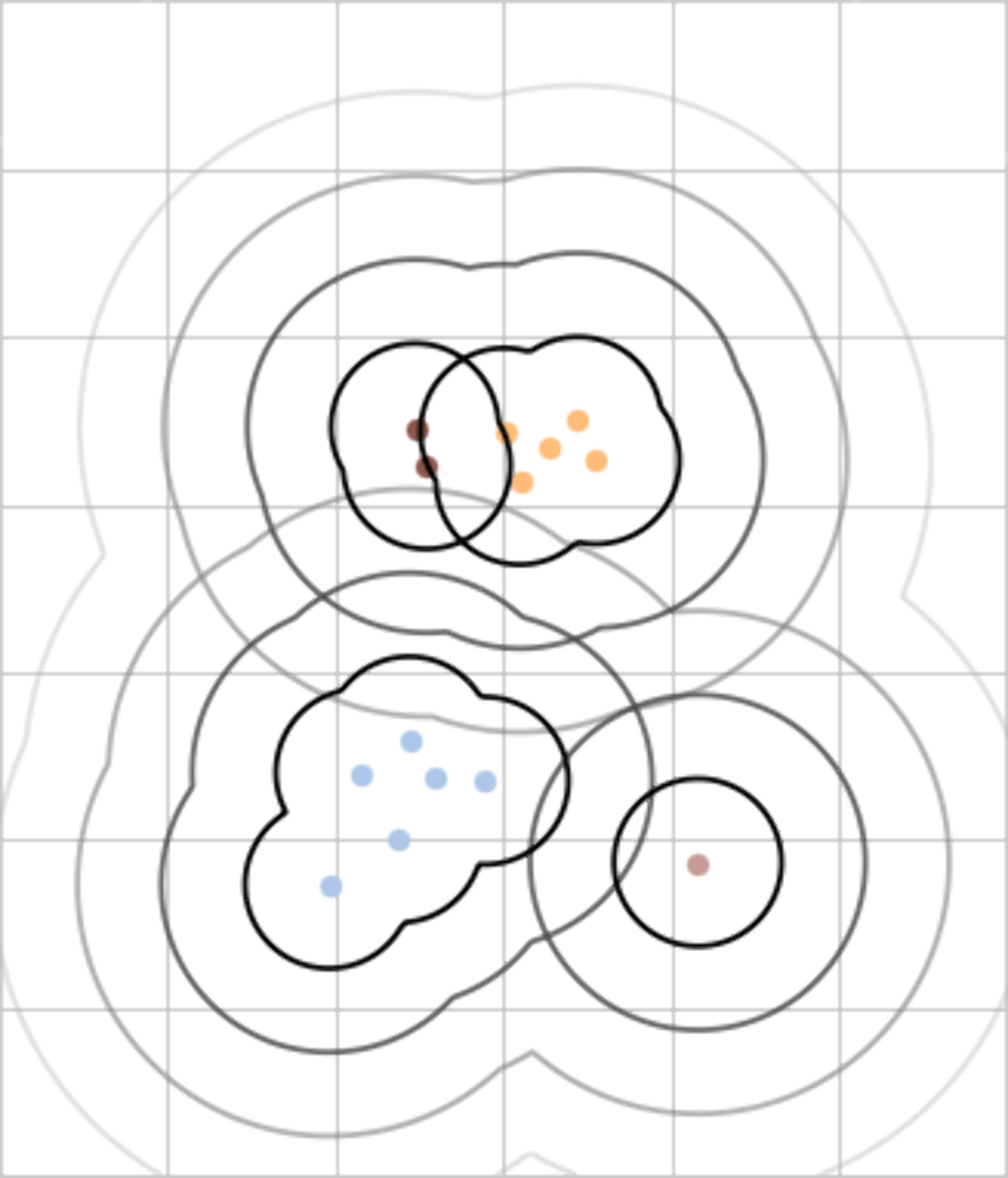}
\caption{}
\label{fig:boids1}
\end{subfigure}
\begin{subfigure}{0.25\linewidth}
\centering
\includegraphics[width=\linewidth]{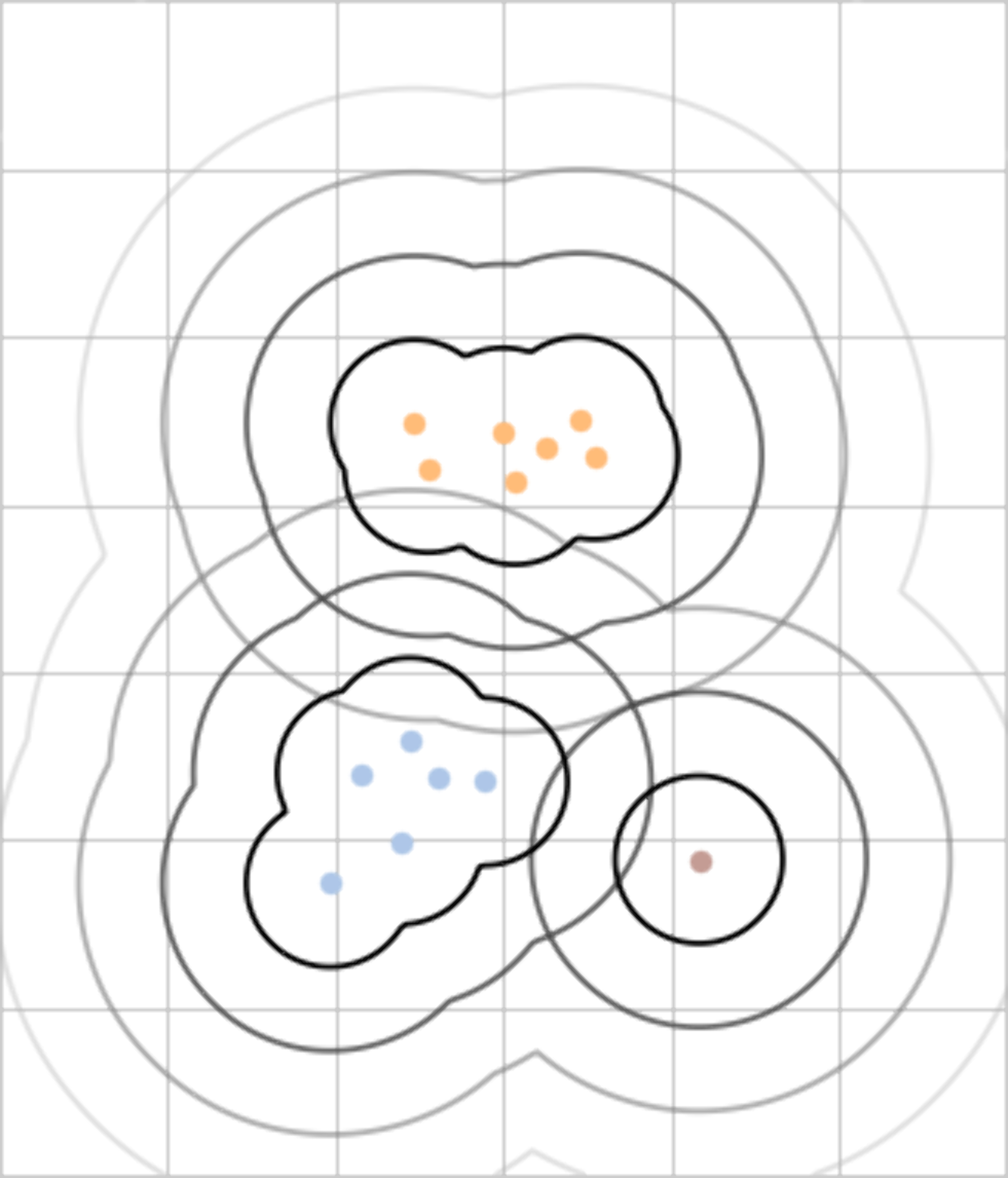}
\caption{}
\label{fig:boids2}
\end{subfigure}
\begin{subfigure}{0.25\linewidth}
\centering
\includegraphics[width=\linewidth]{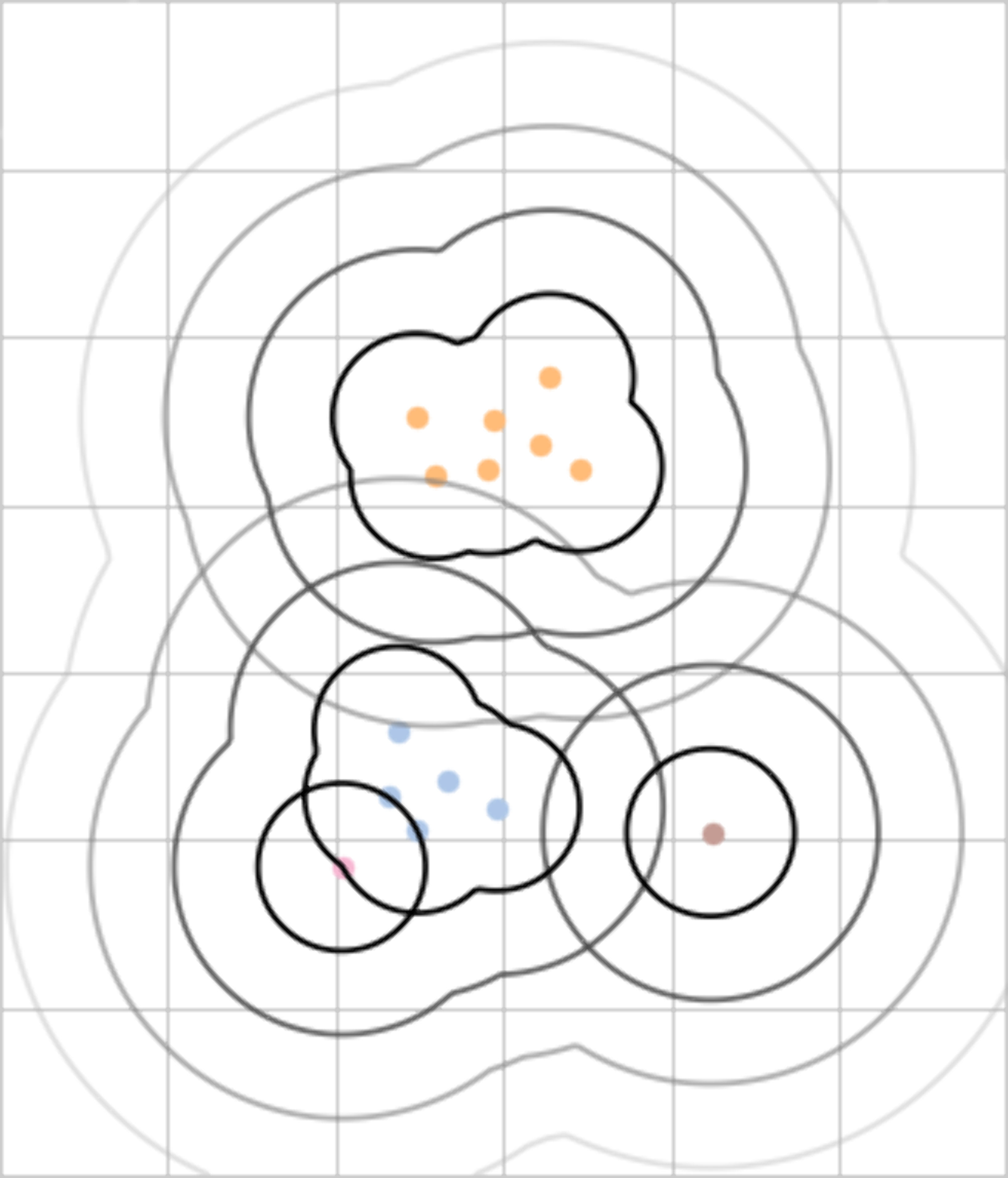}
\caption{}
\label{fig:boids3}
\end{subfigure}
\caption{Three levels of a temporal hierarchical clustering. The
  contours show the coarse cluster structure which results from cutting the
  ultrametric at various offsets. Points which appear together within a contour
  share a cluster at that height in the ultrametric tree.
  \textbf{(\ref{fig:boids1})} Yellow and brown clusters are close.
  \textbf{(\ref{fig:boids2})} One level later, yellow and brown clusters
  merge. Note that the coarse structure remains stable.
  \textbf{(\ref{fig:boids3})} Ten levels later, a blue point (now pink)
  splits from its cluster.
  \label{fig:boids}}%
\end{figure}

For completeness we now describe the rules of the simulation in detail. At
initialization, a fixed number of actors are spawned on the plane. An actor can
be one of four types, assigned uniformly at random at the time of creation. The
simulation `arena' consists of a neutral square area which exerts no force on
the actors, surrounded by repulsive walls which serve to keep the actors in the
neutral area. The actors exhibit \emph{clumping}, \emph{avoidance}, and
\emph{schooling} behaviors. The clumping rule attracts actors to other nearby
actors of the same type. The avoidance rule causes close actors (of any type) to
repel each other. The schooling behavior weakly accelerates actors in the
direction of the global average velocity among actors of the same type.
Since
our objectives can handle changes in the number of points we include the
following behaviors: When actors are very close to each other they may interact
with some small probability. If actors of the same type interact they may
produce a new actor (of random type). If actors of different type interact one
of them may be deleted from the simulation.

It may be tempting to ask whether or not the clustering can be used to recover
actor types from the output clusterings. The answer, unfortunately, is ``no''.
While this can likely be determined at sufficiently high temporal resolution by
examining the apparent velocities of corresponding points, our clustering
procedure is not sensitive to this and forces proximate points to share a
cluster label.

\section*{Conclusion}
We conclude by briefly mentioning some open questions. In
Section~\ref{sec:gclus} we show that the general problem is $\NP$-hard, though
our proof uses an unnatural metric space. It is unknown if the general version
admits an exact algorithm on ``nice'' metric spaces. Further, it may still be
possible to obtain optimal algorithms for the local and labeled versions of the
problem which are stable under metric perturbations. Last, while we believe that
our adaptations of hierarchical clustering are quite natural, one could consider
alternative models where, say, the distortion is replaced with a tree
dissimilarity measure (e.g. nearest neighbor interchange).

\paragraph*{Grant Acknowledgments.}
This work was partially supported by the NSF grants CCF 1318595, CAREER 1453472, CCF 1423230,
and DMS 1547357.

\bibliographystyle{plainurl}
\bibliography{bibfile}

\begin{thebibliography}{10}

\bibitem{DBLP:conf/compgeom/AbamB09}
M.~Ali Abam and M.~de~Berg.
\newblock Kinetic spanners in $\mathbb{R}^d$.
\newblock In {\em {SOCG}}, 2009.
\newblock URL: \url{http://doi.acm.org/10.1145/1542362.1542371}.

\bibitem{DBLP:conf/nips/AckermanD14}
M.~Ackerman and S.~Dasgupta.
\newblock Incremental clustering: The case for extra clusters.
\newblock In {\em {NIPS}}, 2014.

\bibitem{DBLP:conf/soda/ArthurV07}
D.~Arthur and S.~Vassilvitskii.
\newblock k-means++: the advantages of careful seeding.
\newblock In {\em {SODA}}, 2007.
\newblock URL: \url{http://dl.acm.org/citation.cfm?id=1283383.1283494}.

\bibitem{DBLP:conf/soda/BaschGH97}
J.~Basch, L.~J. Guibas, and J.~Hershberger.
\newblock Data structures for mobile data.
\newblock In {\em {SODA}}, 1997.
\newblock URL: \url{http://dl.acm.org/citation.cfm?id=314161.314435}.

\bibitem{Bur01}
D.~Burago, Y.~Burago, and S.~Ivanov.
\newblock {\em A Course in Metric Geometry}.
\newblock AMS, 2001.

\bibitem{DBLP:journals/jmlr/CarlssonM10}
G.~E. Carlsson and F.~M{\'{e}}moli.
\newblock Characterization, stability and convergence of hierarchical
  clustering methods.
\newblock {\em {JMLR}}, 11, 2010.

\bibitem{DBLP:conf/stoc/CharikarCFM97}
M.~Charikar, C.~Chekuri, T.~Feder, and R.~Motwani.
\newblock Incremental clustering and dynamic information retrieval.
\newblock In {\em {STOC}}, 1997.

\bibitem{DBLP:journals/corr/DeyRS14}
T.~K. Dey, A.~Rossi, and A.~Sidiropoulos.
\newblock Spectral concentration, robust k-center, and simple clustering.
\newblock {\em CoRR}, abs/1404.1008, 2014.
\newblock URL: \url{http://arxiv.org/abs/1404.1008}.

\bibitem{DBLP:conf/esa/DeyRS17}
T.~K. Dey, A.~Rossi, and A.~Sidiropoulos.
\newblock Temporal clustering.
\newblock In {\em {ESA}}, volume~87 of {\em LIPIcs}, 2017.
\newblock URL: \url{https://doi.org/10.4230/LIPIcs.ESA.2017.34}.

\bibitem{DBLP:journals/corr/DeyRS17}
T.~K. Dey, A.~Rossi, and A.~Sidiropoulos.
\newblock Temporal clustering.
\newblock {\em CoRR}, abs/1704.05964, 2017.
\newblock URL: \url{http://arxiv.org/abs/1704.05964}.

\bibitem{DBLP:conf/colt/EldridgeBW15}
J.~Eldridge, M.~Belkin, and Y.~Wang.
\newblock Beyond hartigan consistency: Merge distortion metric for hierarchical
  clustering.
\newblock In {\em {COLT}}, volume~40, 2015.

\bibitem{DBLP:conf/stoc/FarachKW93}
M.~Farach, S.~Kannan, and T.~J. Warnow.
\newblock A robust model for finding optimal evolutionary trees.
\newblock In {\em {STOC}}, 1993.
\newblock URL: \url{http://doi.acm.org/10.1145/167088.167132}.

\bibitem{forgy1965cluster}
E.~W. Forgy.
\newblock Cluster analysis of multivariate data: efficiency versus
  interpretability of classifications.
\newblock {\em Biometrics}, 21, 1965.

\bibitem{DBLP:journals/comgeo/FriedlerM10}
S.~A. Friedler and D.~M. Mount.
\newblock Approximation algorithm for the kinetic robust k-center problem.
\newblock {\em Comput. Geom.}, 43(6-7), 2010.

\bibitem{DBLP:journals/siamcomp/GabowT89}
H.~Gabow and R.~Tarjan.
\newblock Faster scaling algorithms for network problems.
\newblock {\em {SIAM} J. Comput.}, 18(5), 1989.
\newblock URL: \url{https://doi.org/10.1137/0218069}.

\bibitem{DBLP:conf/compgeom/GaoGN04}
J.~Gao, L.~J. Guibas, and A.~Thanh Nguyen.
\newblock Deformable spanners and applications.
\newblock In {\em {SOCG}}, 2004.
\newblock URL: \url{http://doi.acm.org/10.1145/997817.997848}.

\bibitem{DBLP:journals/mor/HochbaumS85}
D.~S. Hochbaum and D.~B. Shmoys.
\newblock A best possible heuristic for the \emph{k}-center problem.
\newblock {\em Math. Oper. Res.}, 10(2), 1985.
\newblock URL: \url{https://doi.org/10.1287/moor.10.2.180}.

\bibitem{DBLP:books/ph/JainD88}
A.~K. Jain and R.~C. Dubes.
\newblock {\em Algorithms for Clustering Data}.
\newblock Prentice-Hall, 1988.

\end{thebibliography}

\end{document}